\definecolor{light-gray}{RGB}{230,230,230}
\newcommand{\update}[1]{#1}
\newcommand{\required}[1]{#1}
\newcommand{\eat}[1]{}
\newcommand{\cegis}{\textsf{CEGIS}}
\newcommand{\eqsat}{\textsf{EQSAT}}
\newcommand{\zzz}{\texttt{\texttt{z3}}}
\newcommand{\egraph}{\textsf{e-graph}}
\newcommand{\eclass}{\textsf{e-class}}
\newcommand{\eclasses}{\textsf{e-classes}}
\newcommand{\enode}{\textsf{e-node}}
\newcommand{\enodes}{\textsf{e-nodes}}
\newcommand{\prem}{\textsf{PreM}}
\newcommand{\rai}{relational\underline{AI}}
\newcommand{\defeq}{\stackrel{\text{def}}{=}}
\newcommand{\B}{\mathbb B} 
\newcommand{\N}{\mathbb N} 
\newcommand{\R}{\mathbb R} 
\newcommand{\set}[1]{\{#1\}}                    
\newcommand{\setof}[2]{\{{#1}\mid{#2}\}}        
\newcommand{\lfp}{\text{\sf lfp}}
\newcommand{\one}{\bar 1}
\newcommand{\zero}{\bar 0}
\newcommand{\cd}{\text{ :- }}
\newcommand{\name}{\text{\sf Datalog}^\circ}
\newcommand{\trop}{\text{\sf Trop}}
\newcommand{\totrop}[1]{[#1]_{\infty}^{0}}
\newcommand{\bA}{\mathbf{A}}
\newcommand{\bB}{\mathbf{B}}
\newcommand{\bv}{\mathbf{v}}
\newcommand{\property}{Property B}
\begin{document}

\title{Optimizing Recursive Queries with Program Synthesis}


\author{Yisu Remy Wang}
\affiliation{%
 \institution{University of Washington}
 \institution{\rai{}}
 \country{USA}
}

\author{Mahmoud Abo Khamis}
\affiliation{%
 \institution{\rai{}}
 \country{USA}
}

\author{Hung Q. Ngo}
\affiliation{%
 \institution{\rai{}}
 \country{USA}
}

\author{Reinhard Pichler}
\affiliation{%
 \institution{TU Wien}
 \country{Austria}
}

\author{Dan Suciu}
\affiliation{%
 \institution{University of Washington}
 \institution{\rai{}}
 \country{USA}
}


\begin{abstract}
Most work on query optimization has concentrated on loop-free queries.
However, data science and machine learning workloads today typically involve recursive or
iterative computation. In this work, we propose a novel framework for optimizing
recursive queries using methods from program synthesis. In particular, we introduce a simple
yet powerful optimization rule called the ``FGH-rule'' which aims to find a faster way to
evaluate a recursive program. The solution is found by making use of powerful tools, such as
a program synthesizer,
an SMT-solver, and an equality saturation system.
We demonstrate the strength of the optimization by showing
that the  FGH-rule can lead to speedups up to 4 orders of magnitude on
three, already optimized Datalog systems.
\end{abstract}






\maketitle

\section{Introduction}

\label{sec:intro}


Most database systems are designed to support primarily non-recursive
(loop-free) queries.  Their optimizers are based on the rule-driven,
cost-based Volcano architecture, designed specifically for optimizing
non-recursive query plans.  However, most data science and machine
learning workloads today involve some form of recursion or iteration.
Examples include finding the connected components of a graph,
computing the page rank, computing the network centrality, minimizing
an objective function using gradient descent, etc.  The importance of
supporting recursive queries has been noted by system designers.
Some modern data analytics systems, like Spark or Tensorflow, support
for-loops. The SQL standard defines a limited
form of recursive queries, using the
\texttt{with} construct, and some popular engines, like Postgres or
SQLite, do support this restricted form of recursion.

Datalog is a language
designed specifically for recursive queries, and it is gaining in
popularity~\cite{DBLP:conf/datalog/AlvaroMCHMS10,
DBLP:reference/db/RoscoeL18,
10.1145/3452021.3458815,
10.1145/1989323.1989456,
BigDatalog,
DBLP:journals/pvldb/FanZZAKP19,
DBLP:journals/tkde/SeoGL15,
10.14778/2824032.2824052,
francislandau-vieira-eisner-2020-wrla}.
But the optimization problem for recursive queries is
much less studied.  A datalog program consists of multiple rules,
defining several, mutually recursive relations, and one
distinguished relation name which is the output of the program.  The
effect of the program consist of repeatedly applying the rules,
sometimes called the {\em body} of the program, until a fixpoint is
reached, then it returns the output relation.
Datalog engines typically optimize the loop body, without optimizing
the actual loop.  The few systems that do (for example Soufflé) apply
only limited optimization techniques, like magic set optimization and
semi-naive evaluation, which are mainly restricted to positive Datalog
queries.

In this paper we describe a new query optimization framework for
recursive queries.  Our framework replaces a recursive program with
another, equivalent recursive program, whose body may be quite
different, and thus focuses on optimizing the recursive program as a
whole, not on optimizing its body in isolation; the latter can be done
separately, using standard query optimization techniques.  Our
optimization is based on a novel rewrite rule for recursive programs,
called the FGH-rule, which we implement using {\em program synthesis},
a technique developed in the programming languages and verification
communities. We introduce a new method for inferring loop invariants,
which extends the reach of the FGH-rule, and also show how to use
global constraints on the data for semantic optimizations using the
FGH-rule.  We explain these points in some details next.

{\bf The FGH-Rule} At the core of our approach is a novel, yet very
simple rewrite rule, called the FGH-rule (pronounced {\em fig-rule}),
which can be used to prove that two recursive programs are equivalent,
even when their loop bodies are quite different.  We show that the
FGH-rule can express previously known optimizations for Datalog,
including magic sets and semi-naive evaluation, and also a wide range
of new optimizations.  The optimized program is often significantly
more efficient than the original program, and sometimes can have a
strictly lower asymptotic complexity.  We implemented a
source-to-source optimizer using the FGH-rule, evaluated its
effectiveness on several Datalog systems, and observed speedups of up
to $4$ orders of magnitude (Sec.~\ref{sec:eval}).

For a taste of the FGH-optimization, consider the following example,
from~\cite{DBLP:journals/tplp/ZanioloYDSCI17,DBLP:conf/amw/ZanioloYIDSC18}:
compute the connected components of an undirected graph $E(x,y)$.  The
Datalog program in Fig.~\ref{fig:cc} (a) achieves this by first
computing the transitive closure relation $TC(x,y)$, then computing a
$\min$-aggregate query
assigning to every node $x$ the smallest label $L[y]$ of all nodes $y$ reachable from $x$.
In contrast, the optimized program in Fig.~\ref{fig:cc} (b)
computes directly the CC label of every node $x$ as the  minimum of its own label
and the smallest CC label of its neighbors,
using a single recursive rule with
$\min$-aggregation.
The space complexity of the transitive closure is
$O(n^2)$, which, in practice, is prohibitively expensive on large graphs.
On the other hand, the optimized query has space complexity $O(n)$.

\begin{figure}
  \centering
\fcolorbox{black}{light-gray}{\parbox{0.45\textwidth}{
\begin{align*}
  TC(x,y) \cd &  [x=y] \vee \exists z(E(x,z) \wedge TC(z,y)) \\
  CC[x] \cd & \min_y \setof{L[y]}{TC(x,y)}
\end{align*}
}}
\newline\null\hfill(a)\hfill\null

\fcolorbox{black}{light-gray}{\parbox{0.45\textwidth}{
\begin{align*}
  CC[x] \cd & \min(L[x], \min_y \setof{CC[y]}{E(x,y)})
\end{align*}
}}
\newline\null\hfill(b)\hfill\null

\caption{Unoptimized (a) and optimized (b) Datalog program for the connected components of an undirected graph.}
  \label{fig:cc}
\end{figure}

{\bf Pattern Matching vs.\ Query Synthesis} Applying the FGH-rule is
an instance of {\em query rewriting using views}.  In that problem we
are given a set of view expressions and a query, and the task is to
rewrite the query to use the view expressions rather than the base
relations. This problem has been extensively studied in the
literature~\cite{DBLP:journals/vldb/Halevy01}, and today's database
systems perform it using pattern
matching~\cite{DBLP:conf/sigmod/GoldsteinL01}.  This is a form of
transformational synthesis, where every candidate query rewriting is
guaranteed to be correct, because it is obtained by applying a limited
set of manually crafted rules (patterns), which are guaranteed to be
correct.  However, the FGH-rule often requires exploring a very large
space, which cannot be covered by a limited set of rules.  In this
paper we propose to use {\em counterexample-guided inductive
  synthesis} (\cegis) for this purpose, which is a technique designed
for program
sketching~\cite{DBLP:conf/asplos/Solar-LezamaTBSS06,DBLP:conf/tacas/TorlakJ07}.
When applied to our context, we call this technique {\em query
  synthesis}.  Unlike pattern matching, query synthesis explores a
much larger space, by examining rewritings that are not necessarily
correct, and need to be checked for correctness by a verifier (\zzz\ in
our system).
The verifier also produces a small counterexample
database for each rejected candidate, and these counterexamples are
collected by the synthesizer and used to produce only candidate
rewritings that pass all the previous counterexamples, which
significantly prunes the search space of the synthesizer.  We report
in Sec.~\ref{sec:eval} synthesis times of less than 1 second, even for
complex queries that use global constraints and require inferring loop
invariants.



{\bf Monotone Queries and Semiring Semantics} Datalog is, by
definition, restricted to monotone queries.  This ensures that every
query has a well-defined semantics, namely the least fixpoint of its
immediate consequence operator.  Existing optimizations for Datalog,
like semi-naive evaluation and magic set rewriting, apply mainly to
monotone queries. Even stratified negation can (if at all)
only be handled by imposing appropriate restrictions~\cite{DBLP:journals/corr/abs-1909-08246}.
But queries that contain aggregates or negation
(expressed in SQL via subqueries) are not monotone, and most systems
that support recursion prohibit the combination of aggregates and
recursion.  This has two shortcomings: it limits what kind of queries
the user can express, and also prevents many of our FGH-rewritings.
For example, the simple computation of connected components in
Fig.~\ref{fig:cc} (a) can be expressed in PostgreSQL, or in SQLite, or
in Soufflé,
because the first rule uses
only recursion and the second rule uses only aggregation.  However,
none of these systems accepts the query in Fig.~\ref{fig:cc} (b),
because it combines recursion and aggregation.\footnote{Prior
  work~\cite{DBLP:conf/pods/GangulyGZ91,DBLP:journals/tkde/SeoGL15}
  has proposed extending Datalog with $\min$ and $\max$ aggregates by
  explicitly re-defining the semantics of recursive rules with
  aggregates.  Our approach keeps the standard least fixpoint
  semantics, but generalizes the semiring.}  In order to express such
queries, in this paper we propose an extension of Datalog, following
the approach in~\cite{DBLP:conf/pods/GreenKT07}, where the relations
are interpreted over {\em ordered semirings}.

A semiring is an
algebraic structure with two operations, $\oplus, \otimes$.
Traditional Datalog corresponds to the Boolean semiring, where these
two operators are $\vee, \wedge$, while the query in Fig.~\ref{fig:cc}
(b) is over the Tropical semiring, where the two operators are
$\min, +$ (reviewed in Sec.~\ref{sec:background}).
We call this extension of Datalog to ordered semirings $\name$,
pronounced ``Datalogo'', where the circle represents
the semiring.  In $\name$ recursion is still restricted to monotone\footnote{This
monotonicity is over the partial order from the ordered semiring.}
queries, but monotone queries in $\name$ include queries with aggregates, over an
appropriate semiring.  The query in Fig.~\ref{fig:cc} (b) is monotone
over the (ordered) tropical semiring.

{\bf Loop Invariants} One difficulty in reasoning about loops in
programming languages is the need to discover loop invariants.  Some
(but not all) applications of the FGH-rule also require the discovery
of loop invariants.  We describe a novel technique for inferring loop
invariants for $\name$ programs, by combining symbolic execution with
equality saturation, and using a verifier.  We execute symbolically
the recursive program for a very small number of iterations (five in
our system), obtain query expressions for the IDBs (the recursive
predicates), and construct all identities satisfied by the IDBs.
Then, we retain only candidates that hold at each iteration, and check
each candidate for correctness using the SMT solver.  By inferring and
using loop invariants we show that we can significantly improve some
instances of magic-set optimizations from the literature: we call the
new optimization {\em beyond magic}.


{\bf Constraints and Semantic Optimizations} Optimizations that are
conditioned on certain constraints on the database are known as {\em
semantic optimizations}~\cite{DBLP:journals/debu/RamakrishnanS94}.
SQL optimizers routinely use key constraints and foreign key
constraints to optimize queries.  More powerful optimizations can
be performed using the chase and back-chase
framework~\cite{DBLP:conf/vldb/DeutschPT99,DBLP:conf/sigmod/PopaDST00},
and these include optimizations under inclusion constraints, or
conditional functional dependencies, or tuple generating constraints.
However, all constraints that are useful for optimizing non-recursive
queries are {\em local}.  In contrast, the FGH-rule optimizes
recursive queries, and therefore it can also exploit {\em global}
constraints.  For example, suppose the database represents a graph,
and the global constraint states that the graph is a tree.  This
global constraint does not help optimize non-recursive queries, but
can be used to great advantage to optimize some recursive queries; we
give details in Sec.~\ref{subsec:constraints}.

{\bf Equality Saturation Systems} Throughout our optimizer we need to
manage symbolic expressions of queries, and their equivalence classes,
as defined by a set of rules.  We uses for this purpose a
state-of-the-art Equality Saturation System (\eqsat),
EGG~\cite{DBLP:journals/pacmpl/WillseyNWFTP21}.  We show how to use
\eqsat\ for checking equality under constraints, inferring loop
invariants, and ``denormalization'' (which is essentially query
rewriting using views).

\textbf{Related Work} Our work was partially inspired by the \prem\
condition, described by Zaniolo et
al.~\cite{DBLP:journals/tplp/ZanioloYDSCI17}, which, as we shall
explain, is a special case of the FGH-rule.  Unlike our system, their
implementation required the programmer to check the \prem\ manually,
then perform the corresponding optimization.
%
Seveal prior systems  leveraged SMT-solvers to reason about
query languages~\cite{
  DBLP:conf/icfem/VeanesGHT09,
  DBLP:conf/cidr/ChuWWC17,
  DBLP:conf/cav/GrossmanCIRS17,
  DBLP:conf/sosp/SchlaipferRLS17,
  DBLP:journals/pacmpl/0001DLC18};
but none of these  consider recursive queries.
%
Datalog synthesizers have been described in~\cite{
  DBLP:conf/cp/AlbarghouthiKNS17,
  DBLP:conf/sigsoft/SiLZAKN18,
  DBLP:conf/ijcai/SiRHN19,
  DBLP:journals/pvldb/WangSCPD20,
  DBLP:journals/pacmpl/RaghothamanMZNS20}.
Their setting is different from ours:
the specification is given by input-output examples, and the
synthesizer needs to produce a program that matches all examples.
A design choice that we made, and which sets us further aside from the
previous systems, is to use an existing \cegis\ system, Rosette; thus,
we do not aim to  improve the \cegis\ system itself, but optimize the
way we use it.

%

{\bf Contributions} In summary, the main contribution of this paper
consists of a new, principled and powerful method for optimizing
recursive queries.  We make the following specific contributions:
\begin{itemize}
\item We introduce a simple optimization rule for recursive queries,
  called the FGH-rule (Sec.~\ref{sec:fgh}).
\item We show how the FGH-rule captures known optimizations (magic
  sets, $\prem$, semi-naive), (Sec.~\ref{subsec:simple:examples}),  some
  new optimizations (Sec.~\ref{subsec:loop:invariants}), and
  optimizations under global constraints
  (Sec.~\ref{subsec:constraints}).
\item We present our novel framework for query optimization via
the FGH-rule  (Sec.~\ref{sec:optimization}).
\item We describe how an SMT solver
(Sec.~\ref{sec:verification}) and a
\cegis\ system (Sec.~\ref{subsec:synthesis})
can be profitably integrated into our FGH-optimizer.
\item We describe how to use an \eqsat\ system for various tasks in
  the FGH optimizer: loop-invariant inference, denormalization,
  and checking equivalence under constraints (Sec.~\ref{sec:semantic-opt}).
\end{itemize}



\section{Background}

\label{sec:background}


{\bf Datalog} \update{A {\em relation} of arity $k$ is a finite subset
  of $D^k$, where $D$ is a fixed domain.  The abbreviations} EDB and
IDB stand for {\em Extensional Database} and {\em Intensional
  Database}, and represent the base relations and the computed
relations respectively.  A {\em Datalog rule} has the form:
\begin{align*}
  R_0(\texttt{vars}) \cd & R_1(\texttt{vars}_1) \wedge \cdots \wedge R_m(\texttt{vars}_m)
\end{align*}
where $R_0$ is an IDB, and $R_1, \ldots, R_m$ are IDBs or EDBs.  The
rule is {\em safe} if every variable occurs in at least some predicate
in the body, and the rule is {\em linear} if its body contains at most
one IDB.  A {\em Datalog program} consists of a set of possibly
mutually recursive rules.  Usually, only a subset of the IDB
predicates are returned to the user, and we will call them the {\em
  answer} IDBs.  The {\em Immediate Consequence Operator}, ICO, is the
mapping on the IDB predicates that consists of one application of all
the Datalog rules.  The {\em semantics} of a Datalog program is given
by the least fixpoint of its ICO.  The {\em naive evaluation
  algorithm} consists of repeatedly applying the ICO until the IDBs no
longer change.

\update{
  In this paper we will combine multiple rules with the same head into
  a single rule by OR-ing their bodies, and writing explicitly all
  existential quantifiers.  This is a common convention used in the
  literature, see e.g.,~\cite{DBLP:journals/jlp/Fitting91}.  For
  example the following datalog program, which computes the transitive
  closure of a relation $E$,
\begin{align*}
  TC(x,y) \cd & E(x,y) \\
  TC(x,y) \cd & E(x,z) \wedge TC(z,y)
\end{align*}
becomes $TC(x,y) \cd E(x,y) \vee \exists z (E(x,z) \wedge TC(z,y))$.
}

{\bf (Pre-)Semirings} A {\em pre-semiring} is a tuple
$\bm S = (S, \oplus, \otimes, \zero, \one)$ where \update{$\oplus$ is
  commutative,} both $\oplus, \otimes$ are associative, have
identities $\zero$ and $\one$ respectively, and $\otimes$ distributes
over $\oplus$.  \update{When $\otimes$ is commutative, then we call
  $\bm S$ a {\em commutative} pre-semiring.  All pre-semirings in this
  paper are commutative, and we will simply refer to them as
  pre-semirings.}  When the equality $x \otimes \zero = \zero$ holds
for all $x$, then it is called a {\em semiring}.  An {\em ordered}
pre-semiring is a pre-semiring with a partial order $\preceq$, where
both $\oplus, \otimes$ are monotone operations.  When the partial
order is defined by $x\preceq y$ iff $\exists z, x\oplus z = y$ then
it is called the {\em natural order}.  Examples of ordered
(pre-)semirings are the Booleans
$\B = (\set{0,1}, \vee, \wedge, 0, 1)$, the closed natural numbers
$\N^\infty = (\N \cup \set{\infty},+,*,0,1)$, the tropical semiring
$\trop = (\N \cup \set{\infty}, \min, +, \infty,0)$, the reversed
tropical semiring $\trop^r=(\N, \max, +, 0, 0)$, the lifted naturals
and lifted reals $\N_\bot = (\N \cup \set{\bot}, +, *, 0, 1)$,
$\R_\bot = (\R \cup \set{\bot}, +, *, 0, 1)$, where
$\bot + x = \bot * x = \bot$.  The structures $\B, \N^\infty, \trop$
are semirings, the others are pre-semirings.  $\B$,
$\N^\infty, \trop$, and $\trop^r$ are naturally ordered.  Confusingly
(!!), the order relation on $\trop$ is the reverse one: $\infty$ is
the smallest, and $0$ is the largest element.  The order relation in
$\N_\bot$ and $\R_\bot$ is given by $\bot \preceq x$ for all $x$: they
are ordered pre-semirings but not naturally ordered.%
\footnote{Note that we define $\trop$ and $\trop^r$ over the natural
  numbers rather than the reals. The motivation for this slight
  deviation from the standard definition of these semirings will
  become clear in Section \ref{sec:verification}: the support of
  integer theories by the SMT-solver \zzz.}


{\bf $\bm S$-relations} An $\bm S$-relation $R$ is a function that
associates to each tuple $t \in D^k$ a value in the semiring,
$R[t] \in \bm S$.  In this context, $\bm S$ is called the {\em value
  space} of the relation $R$, while the domain $D$ of its attributes
is called the {\em key space}.  $\bm S$-relations were first
introduced\footnote{Under the name $K$-relations.} by Green et
al.~\cite{DBLP:conf/pods/GreenKT07} in order to model data provenance.
A $\B$-relation is a set, an $\N^\infty$-relation is a bag (with
possibly infinite multiplicities), an $\R_\bot$-relation is a tensor
(with possibly undefined entries).

{\bf Queries} Consider a relational schema $R_1, R_2, \ldots$ over a
\update{pre-}semiring $\bm S$.  A positive (relational algebra) {\em
  query} is a relational algebra expression using selections,
projections, joins, and unions (no difference operator in the positive
fragment).  \update{The most common definition of the relational
  algebra restricts the predicates used in selections to equality
  predicates, $x=y$.  In this paper we
  follow~\cite{DBLP:conf/pods/GreenKT07} and allow arbitrary
  predicates $p(x,y,\ldots)$ over the value space, including
  disequality $x\neq y$, inequality $x < y$, or any other interpreted
  predicate.}  Green~\cite{DBLP:conf/pods/GreenKT07} showed that
positive relational algebra extends naturally to an arbitrary
semi\-ring $\bm S$.  When $\bm S$ is the Boolean semiring, then this
coincides with the set semantics of relational algebra, and when
$\bm S$ is the semiring of natural numbers, then it coincides with bag
semantics.

{\bf Normal Forms} Alternatively, a query can be described using
rules, as follows.  A {\em sum-product} query is an expression
\begin{align}
  T(x_1, \ldots, x_k) &\cd \bigoplus_{x_{k+1}, \ldots,x_p \in D} A_1 \otimes  \cdots \otimes A_m
                        \label{eq:t:monomial}
\end{align}
where each $A_u$ is a {\em relational atom} of the form
$R_i(x_{t_{1_{i}}}, \ldots, x_{t_{k_i}})$, or some interpreted predicate such as
$x_i > 5x_j+3$.  The variables $x_1, \ldots, x_k$ are called {\em
  free variables}, or {\em head variables}, and the others are called
{\em bound variables}.
A {\em sum-sum-product} query has the form:
\begin{align}
  Q(x_1, \ldots, x_k) &\cd T_1(x_1, \ldots, x_k) \oplus \cdots \oplus T_q(x_1, \ldots, x_k)
\label{eq:sum:sum:product}
\end{align}
where $T_1, T_2, \ldots, T_q$ are sum-product expressions with the
same head variables $x_1, \ldots, x_k$.  When the semiring is
$\B, \N^\infty$ \update{and the interpreted predicates are restricted
  to equality predicates}, then these queries are (Unions of)
Conjunctive Queries (UCQs) under set semantics, or under bag
semantics; \update{when the semiring is $\R_\bot$, then the
  sum-products} are tensor expressions, sometimes called {\em Einsum
  expressions}~\cite{einsum:rocktaschel}.
Every positive relational algebra query $Q$ can be converted into a
sum-sum-product expression, which we call the {\em normal form} of
$Q$.

{\bf Datalog$^{\mathbf{o}}$} \update{Let $\bm S$ be an ordered
  pre-semiring.} A $\name$ program consists of a set of (possibly
recursive) sum-sum-product rules~\eqref{eq:sum:sum:product}
\update{over $\bm S$-relations.} \update{We allow two extensions to
  the expressions~\eqref{eq:t:monomial}
  and~\eqref{eq:sum:sum:product}}: the summation
in~\eqref{eq:t:monomial} may be restricted by some Boolean predicate,
and we also allow \update{an atom $A$ in~\eqref{eq:t:monomial} to be
  an {\em interpreted function}.  One important interpreted function
  is the cast operator $[-]_{\zero}^{\one} : \B \rightarrow \bm S$,
  which maps $0$ to $\zero$ and $1$ to $\one$ and therefore, for any
  predicate $P$, $[P]_{\zero}^{\one}$ is an atom in the pre-semiring
  $\bm S$. For example, $[x < y]_{\zero}^{\one}$ is $\zero \in \bm S$
  when $x \geq y$ and $\one \in \bm S$ when $x < y$; when $\zero,\one$
  are clear from the context, we drop them and write simply $[x<y]$.}
We treat interpreted functions in a similar way to negation in
standard Datalog, and require a program to be {\em stratified}, such
that the interpreted functions are applied only to EDBs or to IDBs
defined in earlier strata. \update{This implies that the ICO of that
  stratum is a monotone function in the IDBs defined by that stratum,
  and its semantics is defined as its least fixpoint.}  Abo Khamis et
al.~\cite{khamis21:_conver_datal_pre_semir} proved that any $\name$
program over the semirings discussed in this section (except for
$\N^\infty$ and $\trop^r$) converges in polynomial time in the size of
the input database.

\begin{example}
  Consider the body of the rule in Fig.~\ref{fig:cc}(b).
  The relations $L, CC$ are over the tropical semiring, while $E$ is
  over the Boolean semiring.  Formally, its body is a sum-sum-product
  expression, with a Boolean predicate:
  \begin{align*}
    & L[x] \oplus \bigoplus_y \setof{CC[y]}{E(x,y)}
  \end{align*}
  \update{Here the summation $\bigoplus_y$ is restricted to those
    values $y$ that satisfy the predicate $E(x,y)$.  Equivalently, we
    can} rephrase it as:
  \begin{align*}
    & L[x] \oplus \bigoplus_y \left(CC[y] \otimes [E(x,y)]_\infty^0\right)
  \end{align*}
  where $[-]_\infty^0$ is the cast operator from $\B$ to $\trop$; it
  maps $0,1$ to $\infty,0$ respectively.  Alternatively, suppose that
  we represent a label $v=L[x]$ using a standard, Boolean-valued
  relation $L(x,v)$, where $x$ is a key, and $v$ is the numerical
  value (label).  Then, instead of the atom $L[x]$ we would write
  $\bigoplus_v \setof{v}{L(x,v)}$, or
  $\bigoplus_v \left(v \otimes [L(x,v)]_\infty^0\right)$.  Here $v$ is
  considered to be an atom.
\end{example}


\section{The FGH-Rule}

\label{sec:fgh}

\update{
  In this section we introduce a simple rewrite rule that allows us to
  rewrite an iterative program to another, possibly more efficient
  program.  Then, we illustrate how this rule, when applied to $\name$
  programs, can express several known optimizations in the literature,
  as well as some new ones.
}

Consider an iterative program that repeatedly applies a function $F$
until some termination condition is satisfied, then applies a function $G$ that
returns the final answer $Y$:
\begin{align}
  & X \leftarrow  X_0 \nonumber \\
  & \texttt{loop } X \leftarrow F(X) \texttt{ end loop} \label{eq:f}\\
  & Y \leftarrow G(X) \nonumber
\end{align}
We call this an FG-program.  The FGH-rule (pronounced {\em FIG-rule})
provides a sufficient condition for the final answer $Y$ to be
computed by the alternative program, called the GH-program:
\begin{align}
  & Y \leftarrow  G(X_0) \nonumber \\
  & \texttt{loop } Y \leftarrow H(Y) \texttt{ end loop} \label{eq:h}
\end{align}

\begin{theorem}[The FGH-Rule]\label{th:fgh} If the following identity
  holds:
  \begin{align}
    G(F(X)) = & \ H(G(X)) \label{eq:fgh}
  \end{align}
  then the FG-program~\eqref{eq:f} is equivalent to the
  GH-program~\eqref{eq:h}.
\end{theorem}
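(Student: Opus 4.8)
The plan is to prove, by induction on the number of loop iterations, that the two programs maintain tightly coupled states. Write $X_n$ for the value of $X$ after $n$ iterations of the $F$-loop in~\eqref{eq:f}, so that $X_0$ is the initial value and $X_{n+1} = F(X_n)$; and write $Y_n$ for the value of $Y$ after $n$ iterations of the $H$-loop in~\eqref{eq:h}, so that $Y_0 = G(X_0)$ and $Y_{n+1} = H(Y_n)$. I claim
\begin{align}
  Y_n = G(X_n) \qquad \text{for all } n \geq 0. \nonumber
\end{align}
The base case $n = 0$ holds by definition of $Y_0$. For the inductive step, assume $Y_n = G(X_n)$; then $Y_{n+1} = H(Y_n) = H(G(X_n)) = G(F(X_n)) = G(X_{n+1})$, where the third equality is exactly the hypothesis~\eqref{eq:fgh} instantiated at $X = X_n$. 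This is the entire creative content of the argument; everything else is reconciling the loop-termination conventions.

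Next I would pin down the semantics of the \texttt{loop} constructs. As in the naive evaluation of $\name$ programs (Sec.~\ref{sec:background}), each loop iterates its body until the state stops changing, and the convergence results of~\cite{khamis21:_conver_datal_pre_semir} guarantee that for the (pre-)semirings considered here this happens after finitely many steps. Let $N$ be least with $X_N = X_{N+1}$; then $X_m = X_N =: X^*$ for all $m \geq N$, so the FG-program returns $G(X^*)$. By the invariant, $Y_m = G(X_m) = G(X^*)$ for all $m \geq N$, so the $Y$-sequence is eventually constant with limit $G(X^*)$; hence, whatever step the $H$-loop halts at, the GH-program also returns $G(X^*)$, and the two programs are equivalent.

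The one place that needs care — and which I expect to be the main obstacle to phrase cleanly rather than to prove — is precisely this matching of the halting behavior: a priori the $H$-loop could stabilize at a different index than the $F$-loop, or, if one drops the finite-convergence guarantee and works with a general monotone $\omega$-chain, one would need $G$ to be Scott-continuous in order to commute it past the supremum $\bigsqcup_n X_n$ and conclude $\lfp(H, G(X_0)) = G(\lfp(F, X_0))$. I would therefore state explicitly the assumptions under which the loops are read as least fixpoints reached by iteration (monotonicity of $F$ from $X_0$, plus finite convergence or continuity of $G$), and observe that under any of them the invariant $Y_n = G(X_n)$ propagates to the limit. Note also that we need not assume $H$ monotone independently: the invariant already forces the $Y$-sequence to converge exactly when the $X$-sequence does.
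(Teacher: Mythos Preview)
Your proposal is correct and takes essentially the same approach as the paper: the paper's proof is nothing more than the commutative diagram $Y_n = G(X_n)$, which is exactly your induction. Your extra paragraphs on loop-termination semantics go beyond what the paper proves; the paper explicitly reads ``equivalent'' as ``after the same number of iterations the outputs agree,'' and defers the observation that the $H$-loop reaches a fixpoint whenever the $F$-loop does to a separate corollary (Corollary~\ref{cor:number:iterations}), so your careful discussion of halting and continuity, while sound, is not needed for the theorem as stated.
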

\begin{proof}
  Let $X_0, X_1, X_2, \ldots$ denote the intermediate values of the
  FG-program, and $Y_0, Y_1, Y_2, \ldots$ those of the GH-program.
  By the FGH-rule,  the following diagram commutes, proving
  the claim:

\[\begin{tikzcd}
	{X_0} & {X_1} & {X_2} & \cdots & {X_n} \\
	{Y_0} & {Y_1} & {Y_2} & \cdots & {Y_n}
	\arrow["F", from=1-1, to=1-2]
	\arrow["G"', from=1-1, to=2-1]
	\arrow["H", from=2-1, to=2-2]
	\arrow["G"', from=1-2, to=2-2]
	\arrow["F", from=1-2, to=1-3]
	\arrow["H", from=2-2, to=2-3]
	\arrow["G"', from=1-3, to=2-3]
	\arrow["F", from=1-3, to=1-4]
	\arrow["H", from=2-3, to=2-4]
	\arrow["G"', from=1-5, to=2-5]
	\arrow["F", from=1-4, to=1-5]
	\arrow["H", from=2-4, to=2-5]
\end{tikzcd}\]
\end{proof}

In this paper we will apply the FGH-rule to optimize $\name$ programs.
In this context, $F$ is the ICO of the $\name$ program, $X$ is the
tuple of all its IDB predicates, and $Y$ are the answer-IDB
predicates.  We will also make the natural assumption that $G$ maps
the initial state $X_0$ of the IDBs of the program~\eqref{eq:f} to the
initial state $Y_0$ of~\eqref{eq:h}.  For example, if both programs
are traditional Datalog programs, then the initial state consists of
all IDBs being the empty set, which we denote, with some abuse, by
$X_0 = \emptyset$, even when $X$ consists of several mutually
recursive IDBs.  Similarly, $Y_0=\emptyset$. Typically, $G$ is a
conjunctive query, which maps $\emptyset$ to $\emptyset$, and in that
case the theorem implies that, if Eq.~\eqref{eq:fgh} holds, then the
following $\name$ programs $\Pi_1, \Pi_2$ return the same answer $Y$:
\begin{align}
\Pi_1: &&  X \cd & F(X) & \Pi_2: && Y \cd & H(Y) \nonumber \\
     &&  Y \cd & G(X) \label{eq:p1:p2}
\end{align}

More generally, however, the theorem does not care about the
termination condition of the FG-programs~\eqref{eq:f}. It only assumes
that the GH-program is executed the same number of iterations as the
FG-program.  However, it follows immediately that, if $F$ reaches a
fixpoint, then so does $H$:

\begin{corollary} \label{cor:number:iterations}
  If the FG-program reaches a fixpoint after $n$ steps (meaning:
  $X_n=X_{n+1}$) then the GH-program also reaches a fixpoint after $n$
  steps ($Y_n = Y_{n+1}$).  The converse fails: the GH-program may
  converge much faster than the FG-program.
\end{corollary}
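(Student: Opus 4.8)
The plan is to observe that the commuting diagram in the proof of Theorem~\ref{th:fgh} already gives more than equivalence at termination: it yields the pointwise relation $Y_i = G(X_i)$ for \emph{every} $i \ge 0$, and the corollary is an immediate consequence of this stronger fact. I would first make the pointwise relation explicit by a short induction. The base case $Y_0 = G(X_0)$ is exactly the standing assumption that $G$ maps the initial state $X_0$ of~\eqref{eq:f} to the initial state $Y_0$ of~\eqref{eq:h}. For the inductive step, assuming $Y_i = G(X_i)$, we compute $Y_{i+1} = H(Y_i) = H(G(X_i)) = G(F(X_i)) = G(X_{i+1})$, where the third equality is the FGH-identity~\eqref{eq:fgh}.

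Given this, the forward direction of the corollary is one line: if $X_n = X_{n+1}$, then $Y_n = G(X_n) = G(X_{n+1}) = Y_{n+1}$, so the GH-program~\eqref{eq:h} has also reached a fixpoint at step $n$. Note that nothing in this argument requires $n$ to be the \emph{first} index at which the $F$-iteration stabilizes; it shows that $G$ carries any fixpoint of the $F$-iteration to a fixpoint of the $H$-iteration, which is all that is claimed.

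For the second sentence — that the converse fails — I would exhibit a small separating instance in which~\eqref{eq:fgh} genuinely holds yet the $H$-iteration stabilizes strictly earlier than the $F$-iteration. The cheapest witness takes $G$ to be a ``lossy'' map, e.g.\ a constant function $G \equiv c$ with $H(c)=c$, so that~\eqref{eq:fgh} holds trivially: then $Y_0 = Y_1 = \cdots = c$ while $F$ may run for an arbitrary number of iterations before converging. A more realistic illustration is the connected-components rewriting of Fig.~\ref{fig:cc}, or the magic-set examples discussed later, where the optimized program reaches its fixpoint in far fewer rounds than the original.

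I do not expect a genuine obstacle here. The only point that needs care is ensuring the relation $Y_i = G(X_i)$ is available for \emph{all} $i$, rather than only at the terminal step — but that is precisely what the commuting diagram in Theorem~\ref{th:fgh} records, so it is a matter of restating, not of new work; and, for the converse, one should check that the chosen separating example really does satisfy~\eqref{eq:fgh}, rather than merely asserting the speedup.
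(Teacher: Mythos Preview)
Your proposal is correct and matches the paper's approach: the paper states the corollary as an immediate consequence of the commuting diagram in Theorem~\ref{th:fgh} without spelling out a separate proof, and your inductive argument for $Y_i = G(X_i)$ is exactly the content of that diagram made explicit. Your counterexample with a constant $G$ is a clean witness for the failure of the converse; the paper does not supply one at this point.
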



In summary, the optimization proceeds as follows.  Given an FG-program
defined by the query expressions $F$ and $G$, find a new query
expression $H$ such that the identity $G\circ F = H \circ G$ holds,
then replace the FG-program with the GH-program.  We will describe
this process in detail in Sec.~\ref{sec:optimization}.  In the
remainder of this section we present several examples showing that the
FGH-rule can express several known optimizations, like magic set
rewriting, and new optimizations, like semantic optimizations using
global constraints.

\subsection{Simple Examples}

\label{subsec:simple:examples}

\begin{example}[Connected Components]\label{ex:fgh:cc} Consider the
computation of the connected components of a graph, which
is a well-known target of query optimization in the literature, 
see e.g.,~\cite{DBLP:conf/amw/ZanioloYIDSC18}.
The program is given in Fig.~\ref{fig:cc} (a), and its optimized version
  in Fig.~\ref{fig:cc} (b).  The three transformations $F,G,H$ are as follows:
{\footnotesize
  \begin{align*}
    F(TC) \defeq & TC' &&\mbox{ where}&  TC'(x,y) \defeq & [x=y] \vee \exists z(E(x,z) \wedge TC(z,y))\\
    G(TC) \defeq & CC  &&\mbox{ where}&  CC[x] \defeq & \min_y \setof{L[y]}{TC(x,y)}\\
    H(CC) \defeq & CC' &&\mbox{ where}&  CC'[x] \defeq & \min(L[x], \min_y \setof{CC[y]}{E(x,y)})
  \end{align*}
}
To check the FGH-rule, we compute $CC_1 \defeq G(F(TC)) = G(TC')$,
then compute
$CC_2 \defeq H(G(TC)) = H(CC)$, both shown in Fig.~\ref{fig:cc1:cc2}, and
observe that it becomes identical to $CC_1$ after renaming the
variables $y',y$ to $y,z$ respectively.
\end{example}

\begin{figure}
\fcolorbox{black}{light-gray}{\parbox{0.2\textwidth}{
\footnotesize
  \begin{align*}
    CC_1[x]  \defeq & \min_y \setof{L[y]}{TC'(x,y)}\\
    = & \min_y \setof{L[y]}{[x=y] \vee \exists z(E(x,z) \wedge TC(z,y))}\\
    = & \min(L[x], \min_y\setof{L[y]}{\exists z(E(x,z) \wedge TC(z,y))})\\
    = & \min(L[x], \min_{y,z}\setof{L[y]}{E(x,z) \wedge TC(z,y)})\\
    CC_2[x] \defeq & \min(L[x], \min_y \setof{CC[y]}{E(x,y)}) \\
    =  & \min(L[x], \min_y \setof{\min_{y'} \setof{L[y']}{TC(y,y')}}{E(x,y)})\\
    = & \min(L[x], \min_{y',y}\setof{L[y']}{E(x,y) \wedge TC(y,y')})
  \end{align*}
}}
  \caption{Computing $CC_1$ and $CC_2$ from Example~\ref{ex:fgh:cc}.}
  \label{fig:cc1:cc2}
\end{figure}

%
%

\begin{example}[$\prem$ Property] Zaniolo et
  al.~\cite{DBLP:journals/tplp/ZanioloYDSCI17} define the {\em
    Pre-mappability} rule ($\prem$), and prove that, under this
  rule, one Datalog program with ICO $F$ is equivalent to
  another program with a simpler ICO.
  The $\prem$ property is a
  restricted form of the FGH-rule, more precisely it asserts that the
  identity $G(F(X))=G(F(G(X)))$ holds.  In this case one can simply
  define $H$ as $H(X) = G(F(X))$, and the FGH-rule holds.  The $\prem$
  rule is more restricted than the FGH-rule, in two ways.  First, the
  types of the IDBs of the F-program and the H-program must be the
  same.  Second, the new query $H$ is uniquely
  defined, namely $H \defeq G\circ F$.  While this simplifies the
  optimizer significantly, it also limits the type of optimizations
  that are possible under $\prem$.
\end{example}

\begin{example}[Simple Magic]\label{ex:simple:magic} The simplest
  application of magic set
  optimization~\cite{DBLP:conf/pods/BancilhonMSU86,DBLP:conf/sigmod/MumickP94,DBLP:conf/sigmod/MumickFPR90}
  converts {\em transitive closure} to {\em reachability}.  More
  precisely, it rewrites this program:
  \begin{align}
\Pi_1: &&    TC(x,y) \cd & [x=y] \vee \exists z(TC(x,z) \wedge E(z,y))\nonumber\\
       &&    Q(y) \cd & TC(a,y) \label{eq:simple:magic:nonopt}
  \end{align}
  where $a$ is some constant, into this program:
  \begin{align}
\Pi_2: &&   Q(y) \cd & [y=a] \vee \exists z(Q(z) \wedge E(z,y)) \label{eq:simple:magic:opt}
  \end{align}
  This is a powerful optimization, because it reduces the run time from
  $O(n^2)$ to $O(n)$.  Several Datalog systems support some form of
  magic set optimizations.  We check
  that~\eqref{eq:simple:magic:nonopt} is equivalent
  to~\eqref{eq:simple:magic:opt} by verifying the FGH-rule.  The
  functions $F, G, H$ are shown in Fig.~\ref{fig:simple:magic}.  One
  can verify that $G(F(TC))=H(G(TC))$, for any relation $TC$.  Indeed,
  after converting both expressions to normal form, we obtain
  $G(F(TC))=H(G(TC)) = P$, where:
  \begin{align*}
    P(y) \defeq & [y=a] \vee \exists z(TC(a,z) \wedge E(z,y))
  \end{align*}
  We prove in the full version of this paper that, given a sideways information
  passing strategy (SIPS)~\cite{DBLP:journals/jlp/BeeriR91} every magic set
  optimization~\cite{DBLP:journals/jlp/BalbinPRM91} over a Datalog program can be
  proven correct using a sequence of applications of the FGH-rule.
\end{example}

\begin{figure}
\fcolorbox{black}{light-gray}{\parbox{0.45\textwidth}{
\footnotesize
  \begin{alignat*}{3}
    F(TC) \defeq & TC' &&\mbox{ where } & TC'(x,y) \defeq &[x=y] \vee \exists z(TC(x,z) \wedge E(z,y)) \\
    G(TC) \defeq & Q   &&\mbox{ where } & Q(y) \defeq & TC(a,y) \\
    H(Q) \defeq & Q'   &&\mbox{ where } & Q'(y) \defeq & [y=a] \vee \exists z(Q(z) \wedge E(z,y))
  \end{alignat*}
}}
\caption{Expressions $F,G,H$ in Example~\ref{ex:simple:magic}.}
  \label{fig:simple:magic}
\end{figure}

\begin{example}[Generalized Semi-Naive Evaluation] The na\"ive evaluation
    algorithm for (positive) Datalog re-discovers each fact from step $t$ again at
    steps $t+1, t+2, \ldots$ The {\em semi-naive algorithm} aims at avoiding
  this, by computing only the new facts.  We {\em generalize} the semi-naive
  evaluation from the Boolean semiring to any ordered pre-semiring
  $\bm S$, and prove its correctness using the FGH-rule.  We require
  $\bm S$ to be a complete distributive lattice and $\oplus$ to be
  idempotent, and define the ``minus'' operation as:
  $b \ominus a \defeq \bigwedge \setof{c}{b \preceq a \oplus c}$, then prove using
  the FGH-rule that the following two programs are equivalent:
  \begin{align*}
    \begin{array}{ll|lll}
      \Pi_1: \hspace{-8mm} & & \Pi_2: \hspace{-8mm} && \\
      & X_0 := \emptyset;               && Y_0 := \emptyset;  & \Delta_0 := F(\emptyset)  \ominus \emptyset;\ \ \ \   // = F(\emptyset) \\
      & \texttt{loop } X_t := F(X_{t-1}); && \texttt{loop } & Y_t := Y_{t-1} \oplus \Delta_{t-1};\\
      &                                 &&              & \Delta_t := F(Y_{t}) \ominus Y_{t};
    \end{array}
  \end{align*}
  To prove their equivalence, we define $G(X)\defeq (X, F(X)\ominus X)$,
  $H(X,\Delta) \defeq (X \oplus \Delta, F(X\oplus \Delta )\ominus (X\oplus \Delta))$, and then we prove
  that $G(F(X)) = H(G(X))$ by exploiting the fact that $\bm S$ is a
  complete distributive lattice. In practice, we compute the
  difference $\Delta_t = F(Y_{t})\ominus Y_{t} =
  F(Y_{t-1}\oplus \Delta_{t-1}) \ominus F(Y_{t-1})$
  using an efficient differential rule
  that computes
  $\delta F(Y_{t-1}, \Delta_{t-1}) =
  F(Y_{t-1}\oplus \Delta_{t-1}) \ominus F(Y_{t-1})$, where $\delta F$ is
  an {\em incremental update} query for $F$, i.e., it satisfies the
  identity $F(Y) \oplus \delta F(Y, \Delta) = F(Y\oplus \Delta)$.

  Thus, semi-naive query evaluation generalizes from standard Datalog
  over the Booleans to $\name$ over any complete distributive lattice
  with idempotent $\oplus$, and, moreover, is a special case of the
  FGH-rule.  However, the semi-naive program (more precisely, function
  $H$) is no longer monotone, while our synthesizer (described in
  Sec.~\ref{subsec:synthesis}) is currently restricted to infer only
  monotone functions $H$.  For that reason we do not synthesize the
  semi-naive algorithm; instead we apply it using pattern-matching as
  the last optimization step.
\end{example}

\subsection{Loop Invariants}

\label{subsec:loop:invariants}

More advanced uses of the FGH-rule require a loop-invariant,
$\phi(X)$.  By refining Theorem~\ref{th:fgh} with a loop invariant we
obtain the following corollary:

\begin{corollary}\label{cor:invariant}
  Let $\phi(X)$ be any predicate satisfying the
  following three conditions:
  \begin{align}
    & \phi(X_0) \label{eq:invariant:0}\\
    & \phi(X) \Rightarrow  \phi(F(X)) \label{eq:invariant:1}\\
    & \phi(X) \Rightarrow \left(G(F(X)) =  H(G(X))\right) \label{eq:invariant:2}
  \end{align}
  then the FG-program~\eqref{eq:f} is equivalent to the
  GH-program~\eqref{eq:h}.
\end{corollary}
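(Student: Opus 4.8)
The plan is to replicate the commuting-diagram argument used in the proof of Theorem~\ref{th:fgh}, with the loop invariant supplying exactly the hypothesis needed to make each square of the ladder commute. Write $X_0, X_1, X_2, \ldots$ for the iterates of the FG-program~\eqref{eq:f} (so $X_{t+1} = F(X_t)$) and $Y_0, Y_1, Y_2, \ldots$ for those of the GH-program~\eqref{eq:h} (so $Y_0 = G(X_0)$ by the initialization of~\eqref{eq:h}, and $Y_{t+1} = H(Y_t)$). The final answer of the FG-program after $n$ iterations is $G(X_n)$, and that of the GH-program is $Y_n$, so it suffices to show $Y_t = G(X_t)$ for all $t \ge 0$, and in particular for $t = n$.

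I would carry this out with two successive inductions on $t$. First, an induction establishes that the invariant propagates along the FG-trajectory: $\phi(X_t)$ holds for every $t$. The base case is exactly~\eqref{eq:invariant:0}, and the inductive step is~\eqref{eq:invariant:1} applied at $X_{t+1} = F(X_t)$. Second, an induction establishes $Y_t = G(X_t)$: the base case $Y_0 = G(X_0)$ is the initialization of~\eqref{eq:h}, and for the inductive step, assuming $Y_t = G(X_t)$, the key computation is
\[
Y_{t+1} \;=\; H(Y_t) \;=\; H(G(X_t)) \;=\; G(F(X_t)) \;=\; G(X_{t+1}),
\]
where the third equality is~\eqref{eq:invariant:2}, whose premise $\phi(X_t)$ is discharged by the first induction.

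The only real subtlety — and thus the step that deserves care — is the order of the two inductions. Condition~\eqref{eq:invariant:2} is conditional on $\phi(X)$, and $\phi$ is guaranteed only along the FG-side iterates $X_t$: not for arbitrary states, and not for the GH-side iterates $Y_t$. Hence one must first push the invariant forward through $F$ using only~\eqref{eq:invariant:0} and~\eqref{eq:invariant:1}, and only afterwards invoke~\eqref{eq:invariant:2} to walk the identity $G \circ F = H \circ G$ down the ladder. Diagrammatically, this is the same two-row ladder as in the proof of Theorem~\ref{th:fgh}, except that the $t$-th square commutes because $\phi(X_t)$ holds at its top-left vertex, rather than commuting unconditionally.
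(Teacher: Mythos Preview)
Your proposal is correct and is essentially the paper's argument with the appeal to Theorem~\ref{th:fgh} inlined. The paper phrases it slightly more conceptually---restrict $F$ to the set $\setof{X}{\phi(X)}$, observe via~\eqref{eq:invariant:0} and~\eqref{eq:invariant:1} that the FG-program never leaves this set, and note that~\eqref{eq:invariant:2} is exactly the unconditional FGH identity on that restricted domain, so Theorem~\ref{th:fgh} applies---but your explicit double induction is exactly what that appeal unpacks to.
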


To prove the corollary, we consider the restriction of the function
$F$ to values $X$ that satisfy $\phi$.
Conditions~\eqref{eq:invariant:0} and~\eqref{eq:invariant:1} state
that $\phi$ is a loop invariant for the FG-program~\eqref{eq:f}, while
condition~\eqref{eq:invariant:2} is the FGH-rule applied to the
restriction of $F$ to $\phi$.

\begin{example}[Beyond Magic] \label{ex:more:magic} By using
  loop-invariants, we can perform optimizations that are more powerful
  than standard magic set rewritings.
  For a simple illustration, consider the following program:
\begin{align}
\Pi_1: &&  TC(x,y) \cd & [x=y] \vee \exists z(E(x,z) \wedge TC(z,y))\label{eq:magic:nonopt}\\
      &&  Q(y) \cd & TC(a,y) \nonumber
\end{align}
which we want to optimize to:
\begin{align}
\Pi_2: &&  Q(y) \cd & [y=a] \vee \exists z(Q(z) \wedge E(z,y)) \label{eq:magic:opt}
\end{align}
Unlike the simple magic program in Example~\ref{ex:simple:magic}, here
rule~\eqref{eq:magic:nonopt} is right-recursive.  As shown in
\cite{DBLP:journals/jlp/BeeriR91}, the magic set optimization using
the standard sideways information passing
optimization~\cite{DBLP:books/aw/AbiteboulHV95} yields a program that
is more complicated than our program~\eqref{eq:magic:opt}.  Indeed,
consider a graph that is simply a directed path
$a_0 \rightarrow a_1 \rightarrow \dots \rightarrow a_n$ with
$a = a_0$.  Then, even with magic set optimization, the
right-recursive rule~\eqref{eq:magic:nonopt} needs to derive {\em
  quadratically many\/} facts of the form $T(a_i, a_j)$ for
$i \leq j$, whereas the optimized program~\eqref{eq:magic:opt} can be
evaluated in linear time.  Note also that the FGH-rule cannot be
applied directly to prove that the program~\eqref{eq:magic:nonopt} is
equivalent to~\eqref{eq:magic:opt}.  To see this, denote by
$P_1 \defeq G(F(TC))$ and $P_2 \defeq H(G(TC))$, and observe that
$P_1, P_2$ are defined as:
\begin{align*}
  P_1(y) \defeq & [y=a] \vee \exists z(E(a,z) \wedge TC(z,y)) \\
  P_2(y) \defeq & [y=a] \vee \exists z(TC(a,z) \wedge E(z,y))
\end{align*}
In general, $P_1 \neq P_2$.  The problem is that the FGH-rule requires
that $G(F(TC))=H(G(TC))$ for {\em every} input $TC$, not just the
transitive closure of $E$.  However, the FGH-rule {\em does} hold if
we restrict $TC$ to relations that satisfy the following
loop-invariant $\phi(TC)$:
\begin{align}
  \exists z_1 (E(x,z_1)\wedge TC(z_1,y))\Leftrightarrow \exists z_2 (TC(x,z_2)\wedge E(z_2,y)) \label{eq:invariant:tc}
\end{align}
If $TC$ satisfies this predicate, then it follows immediately that
$P_1=P_2$, allowing us to optimize the program~\eqref{eq:magic:nonopt}
to~\eqref{eq:magic:opt}.  It remains to prove that $\phi$ is indeed an
invariant for the function $F$.  The base case~\eqref{eq:invariant:0}
holds because both sides of~\eqref{eq:invariant:tc} are empty when
$TC=\emptyset$.  It remains to check
$\phi(TC)\Rightarrow \phi(F(TC))$.  Let us denote $TC'\defeq F(TC)$,
then we need to check that, if~\eqref{eq:invariant:tc} holds, then the
predicate $\Psi_1(x,y) \defeq \exists z_1 (E(x,z_1)\wedge TC'(z_1,y))$
is equivalent to the predicate
$\Psi_2(x,y) \defeq \exists z_2 (TC'(x,z_2)\wedge E(z_2,y))$.  We
expand both predicates in Fig.~\ref{fig:more:magic:1:2}, where we
renamed $z$ to $z_2$ in the last line of $\Psi_1$, and renamed $z$ to
$z_1$ in $\Psi_2$.  Their equivalence follows from the
assumption~\eqref{eq:invariant:tc}.
\end{example}

\begin{figure}
\fcolorbox{black}{light-gray}{\parbox{0.45\textwidth}{
\footnotesize
\begin{align*}
\Psi_1(x,y)
  \equiv & \exists z_1 \left(E(x,z_1)\wedge \left(
                [z_1=y] \vee \exists z(E(z_1,z) \wedge TC(z,y))
                \right)\right)\\
  \equiv & \exists z_1 \left(E(x,z_1)\wedge [z_1=y]  \vee E(x,z_1)\wedge \exists z(E(z_1,z) \wedge TC(z,y))\right)\\
  \equiv & E(x,y) \vee \exists z_1 (E(x,z_1) \wedge \exists z(E(z_1,z) \wedge TC(z,y)))\\
  \equiv & E(x,y) \vee \exists z_1 (E(x,z_1) \wedge \exists z_2(E(z_1,z_2) \wedge TC(z_2,y)))\\
  \Psi_2(x,y) \equiv &  \exists z_2  \left(
                \left([x=z_2] \vee \exists z(E(x,z) \wedge TC(z,z_2))\right)
                \wedge E(z_2,y)
                \right) \\
  \equiv & E(x,y) \vee \exists z, z_2 (E(x,z) \wedge TC(z,z_2) \wedge E(z_2,y))\\
  \equiv & E(x,y) \vee \exists z (E(x,z) \wedge \exists z_2(TC(z,z_2) \wedge E(z_2,y)))\\
  \equiv & E(x,y) \vee \exists z_1 (E(x,z_1) \wedge \exists z_2(TC(z_1,z_2) \wedge E(z_2,y)))
\end{align*}
}}
\caption{Predicates $\Psi_1$ and $\Psi_2$ from Example~\ref{ex:more:magic}.}
  \label{fig:more:magic:1:2}
\end{figure}

%
%

\subsection{Semantic Optimization Under Constraints}

\label{subsec:constraints}

{\em Semantic optimization} refers to optimization rules that hold
when the database satisfies certain
constraints~\cite{DBLP:journals/debu/RamakrishnanS94}.  For example,
most database systems today can optimize key/foreign-key joins by
simply removing the join when the table containing the key is not
used anywhere else in the query.

A priori knowledge on the structure of the underlying data
may often provide additional potential for optimization. For instance,
in \cite{DBLP:conf/pods/BancilhonMSU86}, the counting and reverse counting
methods are presented to further optimize the same-generation program if
it is known that the underlying graph is acyclic.
We present a principled way of exploiting such a priori knowledge.
As we show here, recursive queries
have the potential to use {\em global} constraints on the data during
semantic optimization; for example, the query optimizer may exploit
the fact that the graph is a tree, or the graph is connected.

Let $\Gamma$ denote a set of constraints on the EDBs.  Then, the
FGH-rule~\eqref{eq:fgh} needs to be be checked only for EDBs that
satisfy $\Gamma$.  We illustrate this with an example:

\begin{example}[Semantic Optimization] \label{ex:fgh:constraints}
  Consider a hierarchy of subparts consisting of two relations:
  $\texttt{SubPart}(x,y)$ indicates that $y$ is a subpart of $x$, and
  $\texttt{Cost}[x] \in \N$ represents the cost of the part $x$.  We
  want to compute, for each $x$, the total cost $Q[x]$ of all its
  subparts, sub-subparts, etc.  Since the hierarchy can, in
  general, be a DAG, we first need to compute the transitive closure, before
  summing up the costs of all
  subparts,  sub-subparts, etc:
  \begin{align}
\Pi_1: &&    S(x,y) \cd & [x=y] \vee \exists z\left(S(x,z) \wedge \texttt{SubPart}(z,y)\right)\label{eq:cost:nonopt}\\
       &&    Q[x] \cd & \sum_y \setof{\texttt{Cost}[y]}{S(x,y)} \nonumber
  \end{align}
  The first rule, defining the $S$ predicate, is over the $\B$
  semiring, while the second rule, defining $Q$, is over the $\N_\bot$
  semiring.  Consider now the case when our subpart hierarchy is a
  tree.  Then, we can compute the total cost much more efficiently,
  using the following program:
  \begin{align}
\Pi_2: &&    Q[x] \cd & \texttt{Cost}[x] + \sum_z \setof{Q[z]}{\texttt{SubPart}(x,z)} \label{eq:cost:opt}
  \end{align}
  Optimizing the program~\eqref{eq:cost:nonopt} to~\eqref{eq:cost:opt}
  is an instance of {\em semantic optimization}, since this only holds
  if the database instance is a tree.  We do this in three steps. We
  define the constraint $\Gamma$ stating that the data is a tree;
  using $\Gamma$ we infer a loop-invariant $\Phi$ of the program
  $\Pi_1$; using $\Gamma$ and $\Phi$ we prove the FGH-rule, concluding
  that $\Pi_1$ is equivalent to $\Pi_2$.

%
%
  The constraint $\Gamma$ is the conjunction of the following statements:
  \begin{align}
    & \forall x_1, x_2, y \left(\texttt{SubPart}(x_1,y) \wedge \texttt{SubPart}(x_2,y) \Rightarrow x_1=x_2\right) \label{eq:gamma:key}\\
    & \forall x,y\left(\texttt{SubPart}(x,y) \Rightarrow T(x,y)\right) \label{eq:gamma:et}\\
    & \forall x,y,z\left(T(x,z) \wedge \texttt{SubPart}(z,y) \Rightarrow T(x,y)\right)\label{eq:gamma:tc}\\
    & \forall x,y\left(T(x,y) \Rightarrow x\neq y\right) \label{eq:gamma:neq}
  \end{align}
  The first asserts that $y$ is a key in $\texttt{SubPart}(x,y)$.  The
  last three are an Existential Second Order Logic (ESO) statement: they assert that
  there exists some relation $T(x,y)$ that contains
  $\texttt{SubPart}$, is transitively closed, and irreflexive.
  Next, we infer the following loop-invariant of the program $\Pi_1$:
%
  \begin{align}
\Phi: S(x,y) \Rightarrow [x=y] \vee T(x,y) \label{eq:invariant:tree}
  \end{align}
  Finally, we check the FGH-rule, under the assumptions
  $\Gamma, \Phi$.  Denote by $P_1 \defeq G(F(S))$ and
  $P_2 \defeq H(G(S))$.  To prove $P_1=P_2$ we simplify $P_1$ using
  the assumptions $\Gamma, \Phi$, as shown in
  Fig.~\ref{fig:fgh:constraints}. We explain each step.  Line 2-3 are
  inclusion/exclusion.  \update{Line 4 uses the fact that} the term on
  line 3 is $=0$, because the loop invariant implies:
{\footnotesize
  \begin{align*}
     S(x,z) \wedge \texttt{SubPart}(z,y)& \Rightarrow  ([x=z] \vee T(x,z)) \wedge  \texttt{SubPart}(z,y) & \mbox{\update{by~\eqref{eq:invariant:tree}}}\\
&  \equiv \texttt{SubPart}(x,y) \vee (T(x,z) \wedge \texttt{SubPart}(z,y))\\
& \Rightarrow T(x,y) \vee T(x,y) & \mbox{\update{by~\eqref{eq:gamma:tc}}}\\
& \equiv  T(x,y) \\
& \Rightarrow  x\neq y & \mbox{\update{by~\eqref{eq:gamma:neq}}}
  \end{align*}
}
\update{Line 5} follows from the fact that $y$ is a key in
$\texttt{SubPart}(z,y)$.  A direct calculation of $P_2 = H(G(S))$
results in the same expression as line 5 of
Fig.~\ref{fig:fgh:constraints}, proving that $P_1=P_2$.
\end{example}

\begin{figure}
\fcolorbox{black}{light-gray}{\parbox{0.45\textwidth}{
\footnotesize
  \begin{align*}
   P_1[x] = & \sum_y \setof{\texttt{Cost}[y]}{[x=y] \vee \exists z\left(S(x,z) \wedge \texttt{SubPart}(z,y)\right)}\\
    = & \texttt{Cost}[x]+\sum_y \setof{\texttt{Cost}[y]}{\exists z\left(S(x,z) \wedge \texttt{SubPart}(z,y)\right)}\\
      & \ \ \ - \sum_y\setof{\texttt{Cost}[y]}{[x=y] \wedge \exists z\left(S(x,z) \wedge \texttt{SubPart}(z,y)\right)}\\
    = & \texttt{Cost}[x]+\sum_y \setof{\texttt{Cost}[y]}{\exists z\left(S(x,z) \wedge \texttt{SubPart}(z,y)\right)}\\
    = & \texttt{Cost}[x]+\sum_y \sum_z \setof{\texttt{Cost}[y]}{\left(S(x,z) \wedge \texttt{SubPart}(z,y)\right)}
  \end{align*}
}}
  \caption{Transformation of $P_1 \defeq G(F(S))$ in Example~\ref{ex:fgh:constraints}.}
  \label{fig:fgh:constraints}
\end{figure}

%
%
%

\section{Architecture of FGH-Optimization}\label{sec:optimization}

\begin{figure}
  \null\mbox{\parbox{\linewidth}{\hspace{-5mm}\includegraphics[width=1.3\linewidth]{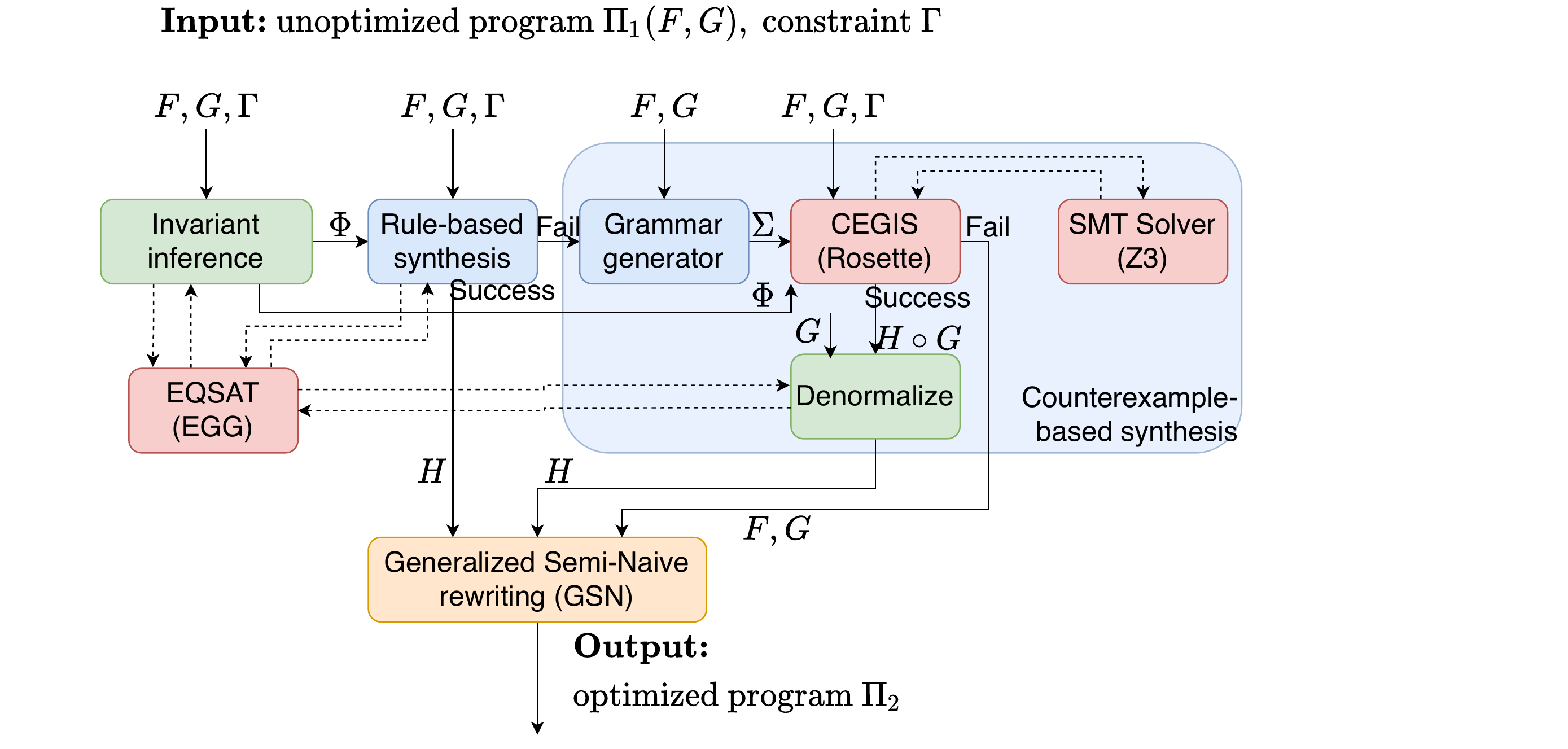}}}
  \caption{The architecture of the FGH-optimizer.  The input is the
    unoptimized program $\Pi_1$, consisting of the functions $F, G$ and
    the database constraint $\Gamma$. The output consists of the
    optimized program $\Pi_2$, see Eq.~\eqref{eq:p1:p2}.  Blue boxes are
    described in Section~\ref{subsec:synthesis} and the green boxes in
    Section~\ref{sec:semantic-opt}.  The yellow box
    (generalized semi-naive optimization) is described in
    Section~\ref{subsec:simple:examples}.  The red boxes represent
    three state-of-the-art systems: Rosette is a \cegis\ system~\cite{DBLP:conf/asplos/Solar-LezamaTBSS06,DBLP:conf/tacas/TorlakJ07,DBLP:conf/oopsla/TorlakB13}, \zzz\ is
    an SMT solver~\cite{10.1007/978-3-540-78800-3_24}, and EGG is an \eqsat\ system~\cite{DBLP:journals/pacmpl/WillseyNWFTP21}.}\label{fig:arch}
\end{figure}

In the rest of the paper we describe our synthesis-based
FGH-optimizer, whose architecture is shown in Fig.~\ref{fig:arch}.  We
optimize one stratum at a time.  We denote by $\Pi_1$ one stratum of
the input program, denote by $X$ its recursive IDBs, by $Y$ its output
IDBs, and by $F,G$ the ICO and the output operator respectively; see
Eq.~\eqref{eq:p1:p2}.  The optimizer also takes as input a database
constraint, $\Gamma$.  The optimizer starts by inferring the loop
invariant $\Phi$; this is discussed in Sec.~\ref{sec:semantic-opt}.
Next, the optimizer needs to find $H$ such that
$\Gamma \wedge \Phi \models \left(G(F(X)) = H(G(X))\right)$.  To
reduce clutter we will often abbreviate this to
$\Gamma \models \left(G(F(X)) = H(G(X))\right)$, assuming that
$\Gamma$ incorporates $\Phi$.  The optimizer makes two attempts at
synthesizing $H$: it first tries using a simpler rule-based
synthesizer, and, if that fails, then it tries the state-of-the-art
Counterexample-Guided Inductive Synthesis (\cegis).  This is described
in Sec.~\ref{subsec:synthesis}.  Finally, $H$ (or the original program
if the FGH-optimization failed) is further transformed using
generalized semi-naive optimization, as we already described in
Sec.~\ref{subsec:simple:examples}.  Notice that stratification ensures
that no interpreted functions are applied to the IDBs $X$; they can
still be applied to the EDBs, or occur in predicates.

The FGH-optimization is an instance of {\em query rewriting using
  views}~\cite{DBLP:journals/vldb/Halevy01,DBLP:conf/sigmod/GoldsteinL01}.
Denoting by $Q \defeq G(F(X))$ and $V \defeq G(X)$, one has to rewrite
the query $Q$ using the view(s) $V$, in other words $Q = H(V)$.  This
is a {\em total} rewriting, in the sense that $H$ is no longer allowed
to refer to the IDBs $X$.  This problem is NP-complete for UCQs with
set semantics~\cite{DBLP:conf/pods/LevyMSS95}, in NP for UCQs with bag
semantics\footnote{This follows from the fact that, under bag semantics, two UCQ queries are
  equivalent iff they are
  isomorphic.~\cite{DBLP:conf/icdt/Green09,DBLP:journals/pvldb/WangHSHL20}.},
and undecidable for realistic SQL queries that include aggregates and
arithmetic~\cite{DBLP:conf/sigmod/GoldsteinL01}.
Systems that support query rewriting using views are rule-based, and
apply a set of hand crafted, predefined patterns; our first attempt to
synthesize $H$ is also rule-based.  Such synthesizers usually cannot take advantage
of database constraints, but we will show in
Sec.~\ref{sec:semantic-opt} how to exploit the constraint $\Gamma$ in
the rule-based synthesizer.  However, rule-based rewriting explores a
limited space, which is insufficient for many FGH-optimizations.
In a seminal paper~\cite{DBLP:conf/asplos/Solar-LezamaTBSS06}
Solar{-}Lezama proposed an alternative to rule-based transformation,
called {\em Counterexample-Guided Inductive Synthesis}, \cegis: the
synthesizer produces potentially incorrect candidates, and an SMT
solver verifies their correctness.  In the FGH-optimizer we use a
program synthesizer, Rosette~\cite{DBLP:conf/oopsla/TorlakB13}, to synthesize $H$.

At a conceptual level, program synthesis has two abstract steps: {\em
  generate} $H$, and {\em verify} $G(F(X))=H(G(X))$.  While the
verifier is not used explicitly, it is used implicitly in the
synthesizer, and we describe it in Sec.~\ref{sec:verification}.  Then we
describe the synthesizer in Sec.~\ref{subsec:synthesis}.




\section{Verification}
\label{sec:verification}

\update{
  We introduced the FGH-rule in Sec.~\ref{sec:fgh} and showed several
  examples.  In order to apply the rule, one needs to check the
  identity~\eqref{eq:fgh}, $F(G(X))=G(H(X))$. In this section we
  describe how we verify this identity.
}
This step is implicit in both boxes {\em Rule-based Synthesis} and
\cegis\ in Fig~\ref{fig:arch}.
\update{
  The identity can be checked in one of two ways: by applying a
  predefined set of identity rules (as currently done by most query
  optimizers), or by using an SMT solver.
}

\subsection{Rule-based  Test}\label{sec:axiomatic-rewrite}

%
Let $P_{1} = G(F(X))$, $P_{2} = H(G(X))$.  To check $P_1=P_2$, the
{\em rule-based test} first normalizes both expressions into a
sum-sum-product expression (Eq.~\eqref{eq:sum:sum:product})
via the semiring axioms, then checks if the expressions are isomorphic:
if yes, then $P_{1} = P_{2}$, otherwise we assume $P_{1}\neq P_{2}$.
The treatment of  a constraint $\Gamma$ will be discussed in Sec.~\ref{sec:semantic-opt}.
This test can be visualized as follows:
\begin{align}
  P_{1} \xrightarrow[]{\text{axioms}} \text{normalize}(P_{1})
  \simeq
  \text{normalize}(P_{2}) \xleftarrow[]{\text{axioms}} P_{2}
\label{eq:equality:isomorphism}
\end{align}
where $\simeq$ denotes isomorphism.  The Rule-based test is sound.
When both $P_1, P_2$ are over the
$\N^\infty$ semiring and have no interpreted functions then it
is also
complete~\cite{DBLP:conf/icdt/Green09,DBLP:journals/pvldb/WangHSHL20}.
This simple test motivates the need for a complete set of axioms that
allows any semiring expression to be normalized.  The axioms include
standard semiring axioms, and axioms about summations and free
variables \texttt{fv}.  For example, in order to
prove $CC_1 = CC_2$ in Example~\ref{ex:fgh:cc} (with
semiring notation in Figure~\ref{fig:cc1:semiring}) one needs
all three axioms below:
\begin{align}
  \bigoplus_x \bigoplus_y \left(\cdots\right) = \ &  \bigoplus_{x,y}\left( \cdots \right)  \label{eq:axiom:plus:plus} \\
  A \otimes \bigoplus_{x} B = \ & \bigoplus_{x} A \otimes B \text{ when } x \not \in \texttt{fv}(A) \label{eq:axiom:pullmul}\\
  \bigoplus_x \left(A(x) \otimes [x=y]\right) = \ & A(y)  \label{eq:axiom:plus:equal}
\end{align}
\begin{figure}
\fcolorbox{black}{light-gray}{\parbox{0.2\textwidth}{
\footnotesize
  \begin{align*}
    CC_{1}[x] & = \bigoplus_{y} L[y] \otimes \big([x=y]_{\infty}^{0} \oplus \bigoplus_{z} [E(x,z)]_{\infty}^{0} \otimes [TC(z,y)]_{\infty}^{0}\big) \\
    CC_{2}[x] & = L[x] \oplus \bigoplus_{y} \big( \bigoplus_{y'} L[y'] \otimes [TC(y,y')]_{\infty}^{0}\big) \otimes [E(x,y)]_{\infty}^{0}
  \end{align*}
}}
  \caption{$CC_{1}$ and $CC_{2}$ in semiring notation; their normal
    forms are isomorphic.}
  \label{fig:cc1:semiring}
\end{figure}

\subsection{SMT Test} \label{subsubsec:smt:test} When the
expressions $P_1, P_2$ are over a semiring other than $\N^\infty$, or
they contain interpreted functions, then the rule-based test is
insufficient and we use an SMT solver for our verifier.
We still normalize the
expressions using our axioms, because today's solvers cannot reason
about bound/free variables (as needed in
axioms~\eqref{eq:axiom:plus:plus}-\eqref{eq:axiom:plus:equal}).  The
{\em SMT test} is captured by the following figure:
%
%
\begin{align}
  P_{1} \xrightarrow[]{\text{axioms}} \text{normalize}(P_{1})
  \xleftrightarrow{\text{SMT}} &
  \text{normalize}(P_{2}) \xleftarrow[]{\text{axioms}} P_{2} \label{eq:equality:smt}
\end{align}

\begin{example}[APSP100]\label{ex:sssp100}
  Consider a labeled graph $E$ where $E[x,y]$ represents the cost of
  the edge $x,y$.  The following query over $\trop$
  computes the all-pairs shortest
  path up to length of 100:
{\footnotesize
  \begin{align}
    D[x,y] \cd & \texttt{if } x=y \texttt{ then } 0 \texttt{ else } \min_{z} \left(D[x,z] +E[z,y]\right) \nonumber \\
    Q[x,y] \cd & \min(D[x,y],100)  \label{eq:apss:opt}
  \end{align}
}
%
%
%
  The program is inefficient because it first computes the full path
  length, only to cap it later to 100.  By using the FGH-rule we get:
{\footnotesize
  \begin{align}
    Q[x,y] \cd & \texttt{if } x=y \texttt{ then } 0 \texttt{ else } \min\left(\min_{z} \left(Q[x,z]+E[z,y]\right),100\right) \label{eq:apss:opt:opt}
  \end{align}
}
We show how to verify
that~\eqref{eq:apss:opt:opt} is equivalent to~\eqref{eq:apss:opt}.
Denote by $P_1 \defeq G(F(D))$ and $P_2 \defeq H(G(D))$ (where $F,G,H$
are the obvious functions in the two programs defining $Q$).  After we
de-sugar, convert to semiring expressions, and normalize,
they become:
{\footnotesize
\begin{align*}
  P_{1}[x,y] & =  \left(0 \otimes \totrop{x=y}\right) \oplus \left(\bigoplus_{z} D[x,z] \otimes E[z,y]\right)\oplus 100 \\
  P_{2}[x,y] 
             & = \left(0 \otimes \totrop{x=y}\right) \oplus \left(\bigoplus_{z} D[x,z] \otimes E[z,y]\right)  \oplus \left(100 \otimes\bigoplus_{z} E[z,y]\right)\oplus 100
\end{align*}
}
In the normalized expressions we push the summations past the joins,
i.e., we apply rule~\eqref{eq:axiom:pullmul} from right to left, thus
we write $100 \otimes \bigoplus (\cdots)$ instead of
$\bigoplus (100 \otimes \cdots)$: we give the rationale below.  At
this point, the normalized $P_{1}$ and $P_{2}$ are not isomorphic, yet
they are equivalent if they are interpreted in $\trop$.
We explain below in detail how the solver can check that.
In this particular semiring,
the
identity
$
  100 = \left(100 \otimes \bigoplus_{z} E[z,y]\right)  \oplus 100
$
holds since it becomes
$
100 = \min(100 + \min_{z} E[z,y],100)
$
with $E[z,y] \geq 0$,
once we replace the uninterpreted operators
$\oplus, \otimes$ with $\min, +$.
%
\end{example}

{\bf Implementation} We describe how we implemented the SMT test
$\Gamma \models P_1 = P_2$ using a solver, now also taking
the database constraint $\Gamma$ into account, where
$P_1, P_2$ are the expressions $G\circ F$
and $H \circ G$. We used the \zzz\ solver~\cite{10.1007/978-3-540-78800-3_24}, but our
discussion applies to other solvers as well. We need to normalize
$P_1, P_2$ before using the solver, because solvers require all axioms
to be expressed in First Order Logic. They cannot encode the
axioms~\eqref{eq:axiom:plus:plus}-\eqref{eq:axiom:plus:equal}, because
they are referring to free variables, which is a
meta-logical condition not expressible in First Order Logic.
Once normalized, we encode the equality as a first-order logic
formula, and assert its negation, asking the solver to check if
$\Gamma \wedge \left(P_1 \neq P_2\right)$ is satisfiable.  The solver
returns $\textsf{UNSAT}$, a counterexample, or
$\textsf{UNKNOWN}$.  $\textsf{UNSAT}$ means the
identity holds.  When it returns a counterexample, then the identity
fails, and the counterexample is given as input to the synthesizer
(Sec.~\ref{subsec:synthesis}).  $\textsf{UNKNOWN}$ means that it could
neither prove nor disprove the equivalence and we
assume $P_1 \neq P_2$.  For the theory of reals with $+, *$, despite its
decidability,  \zzz\ often timed out in our experiments.
We therefore used the theory of integers, and
\zzz\ never timed out or returned $\textsf{UNKNOWN}$ in our experiments.

We encode every $\bm S$-relation
$R(x_1, \dots, x_n)$ as an uninterpreted function
$R : \N \times \cdots \times \N \rightarrow \bm S$, where $\bm S$ is
the {\em interpreted} semiring, i.e., $\B$, $\trop$, $\N^\infty$, etc.
We represent natural numbers as
integers with nonnegativity assertions, and represent
the sets $\N^\infty, \N_\bot, \R_\bot$ as union types.
Operators supported by the solver, like $+, *, \min, -$,
are entered unchanged; we treat other operators
as uninterpreted functions.  Unbounded aggregation, like
$\bigoplus_{x} e(x)$, poses a challenge:  there is no
such  operation in any SMT theory.
Here we use the fact that $P_1$ and $P_2$ are normalized
sum-sum-product~expressions:
\begin{align*}
P_1 = & \left(\bigoplus_{x_1} e_1\right) \oplus \left(\bigoplus_{x_2}  e_2\right) \oplus \cdots
&
P_2 = & \left(\bigoplus_{x_1'} e_1'\right) \oplus \left(\bigoplus_{x_2'} e_2'\right) \oplus \cdots
\end{align*}
Assume first that each $x_i$ is a single variable. We
ensure that all the variables $x_1, x_2, \ldots$ in $P_1$ are
distinct, by renaming them if necessary.  Next, we replace each
expression $\bigoplus_{x_i} e_i$ with $u(x_i,e_i)$ where $u$ is an
uninterpreted function.  Finally, we ask the solver to check
\begin{align}
\Gamma \models \left(u(x_1,e_1) \oplus u(x_2,e_2) \oplus \cdots =  u(x_1',e_1') \oplus u(x_2',e_2') \oplus \cdots\right)
\nonumber
\end{align}
This procedure is sound, because if the identity $u(x,e) = u(x',e')$
holds, then $x=x'$ (they are the same variable) and $e=e'$, which
means that $\bigoplus_x e = \bigoplus_{x'} e'$.  Moreover,
when synthesizing $P_2$, we will
ensure that the generator includes the variables $x_1, x_2, \ldots$
present in $P_1$ to achieve a limited form of completeness,
see
Sec.~\ref{subsec:synthesis}. Finally, if a
summation is over multiple variables, we simply nest the uninterpreted
function, i.e., write $\bigoplus_{x,y} e$ as $u(x,u(y,e))$.

\begin{example}  We now finish Example~\ref{ex:sssp100}.
After introducing the
  uninterpreted functions described above, we obtain:
  \begin{align*}
    P_1 = & \min(0+w(x,y), u(z,D[x,z]+E[z,y]), 100) \\
    P_2 = & \min(0+w(x,y), u(z,D[x,z]+E[z,y]), 100+u(z,E[z,y]), 100)
  \end{align*}
  where $w(x,y)$ is an uninterpreted function representing
  $[x=y]_\infty^0$, and $u$ is our uninterpreted function encoding
  summation.  The solver proves that the two expressions are equal,
  given that $w \geq 0$ and $u \geq 0$.  Notice that it was critical
  to factorize the term 100: had we not done that, then the expression
  $100+u(z,E[z,y])$ would be $u(z,100 + E[z,y])$ and the identity
  $P_1=P_2$ no longer holds.
\end{example}

\required{
  {\bf Discussion} Readers unfamiliar with First Order Logic may be
  puzzled by our statement that the identity $u(x,e)=u(x',e')$ holds
  iff $x=x'$ and $e = e'$.  In order to explain this, it helps to
  first review the basic definitions of validity and satisfiability in
  logic.  A statement is ``valid'' if it is true for all
  interpretations of its uninterpreted symbols.  For example, the
  equality $f(x)+y=y+f(x)$ is valid over integers, because it holds
  for all function $f$ and all values of $x$ and $y$.  A statement is
  ``satisfiable'' if there exists interpretations of its uninterpreted
  symbols that make the statement true.  A statement is valid iff its
  negation is not satisfiable.  In our case, the statement
  $u(x,e)=u(x',e')$ is valid if the equality is true for all possible
  interpretations of $u, x, x'$.  For example, suppose we asked the
  solver to check whether $u(x,2(x+1)) = u(y,2y+2)$ is valid.  To
  answer this question, we negate the statement and ask the \zzz\
  solver whether the negation is satisfiable:
  $u(x,2(x+1))\neq u(y,2y+2)$.  One can easily satisfy this with pen
  and paper, e.g.,  $x=1, y=2, u(a,b)=a+b$, then $u(x,2(x+1))=5$,
  $u(y,2y+2)=8$.  \zzz\ also answers ``yes'', and provides the
  following example for the inequality\footnote{Please refer to the
    documentation of \zzz\ for how models for uninterpreted functions
    are constructed.}:
  \[x=0, y=38, u(a,b)= \text{if } a=38\wedge b=78 \text{ then } 6 \text{ else } 4\]
  Therefore, the identity $u(x,2(x+1)) = u(y,2y+2)$ is not valid.  In
  contrast, suppose we asked the solver whether
  $u(x,2(x+1)) = u(x,2x+2)$ is valid.  Its negation is
  $u(x,2(x+1)) \neq u(x,2x+2)$, and \zzz\ returns $\textsf{UNSAT}$,
  which means that the identity is valid.  In general, the identity
  $u(x,e)=u(x',e')$ is valid iff $x=x'$ and $e=e'$.
}

\section{Synthesis}

\label{subsec:synthesis}


\update{
We have seen in Sec.~\ref{sec:verification} how to use an SMT solver
to check the identity $G(F(X))=H(G(X))$.
}
We are now ready to discuss the core of the FGH-optimizer: given only
the query expressions $F, G$, find an expression $H$ such that the
identity $G(F(X)) = H(G(X))$ holds; recall that we denote these
expressions by $P_1, P_2$. As for verification, this can be done by
using only rewriting, or using program synthesis with an SMT solver.
We are also given a database constraint $\Gamma$, and we assume that
we have already added to it the loop invariant $\Phi$.

\subsection{Rule-based Synthesis}

\label{subsec:rule:based:synthesis}

The optimizer first attempts to synthesize $H$ using rule-based
rewriting.  This process is akin to our initial verifier that relies
only on normalization and isomorphism checking.
\begin{align}
  P_{1} \xrightarrow[]{\text{axioms}} \text{normalize}(P_{1}) \xrightarrow[]{\text{axioms}} P_{2}
  \label{eq:synthesis:isomorphism}
\end{align}
There is no obvious way to ``denormalize'' an expression, since many
expressions can share the same normal form.  We used for this purpose
an equality saturation system (\eqsat), which we also used for multiple tasks
of the FGH-optimizer, see Fig~\ref{fig:arch}.  We describe \eqsat\ in Sec.~\ref{sec:semantic-opt}.




\subsection{Counterexample-based Synthesis}

The rule-based synthesis~\eqref{eq:synthesis:isomorphism} explores
only correct rewritings $P_2$, but its space is limited by the
hand-written axioms. The alternative approach, pioneered in the
programming language
community~\cite{DBLP:conf/asplos/Solar-LezamaTBSS06}, is to synthesize
candidate programs $P_2$ from a much larger space, then using an SMT
solver to verify their correctness.  This technique, called
Counterexample-Guided Inductive Synthesis, or \cegis, can find
rewritings $P_2$ even in the presence of interpreted functions,
because it exploits the theory of the underlying domain.  As a first
attempt it can be described as follows (we will revise it below):
\begin{align}
  P_{1} \xrightarrow[]{\text{axioms}} \text{normalize}(P_{1}) \xrightarrow[]{\cegis} P_{2}
  \label{eq:synthesis:cegis}
\end{align}

\subsubsection{Brief Overview of \cegis} We give a brief overview of
the \cegis\ system,
Rosette~\cite{DBLP:conf/tacas/TorlakJ07,DBLP:conf/oopsla/TorlakB13},
that we used in our optimizer.  Understanding its working is
important in order to optimize its usage for FGH-optimization.  The
input to Rosette consists of a {\em specification} and a {\em
  grammar}, and the goal is to synthesize a program defined by the
grammar and that satisfies the specification.
The main loop is implemented with a pair of {\em dueling} SMT-solvers,
the {\em generator} and the {\em checker}.  In our setting, the inputs
are the query $P_1$, the database constraint $\Gamma$, and a small
grammar $\Sigma$ (described below).  \update{The specification is
  $\Gamma \models (P_1 = P_2)$, where $P_2$ is defined by the grammar
  $\Sigma$.}  The generator generates syntactically correct programs
$P_2$, and the verifier checks $\Gamma \models (P_1 = P_2)$.  In the
most naive attempt, the generator could blindly generate candidates
$P_2, P_2', P_2'', \ldots$, until one is found that the verifier
accepts.  This is hopelessly inefficient.  The first optimization in
\cegis\ is that the verifier returns a small counterexample database
instance $D$ for each unsuccessful candidate $P_2$, i.e.,
$P_1(D)\neq P_2(D)$.  When considering a new candidate $P_2$, the
generator checks that $P_1(D_i) = P_2(D_i)$ holds for all previous
counterexamples $D_1, D_2, \ldots$, by simply evaluating the queries
$P_1,P_2$ on the small instance $D_i$.  This significantly reduces the
search space of the generator.

\cegis\
applies a second optimization, where it uses the SMT solver itself to
generate the next candidate $P_2$, as follows.  It requires a fixed
recursion depth for the grammar $\Sigma$; in other words we can assume
w.l.o.g. that $\Sigma$ is non-recursive.  Then it associates a
symbolic Boolean variable $b_1, b_2, \ldots$ to each choice of the
grammar.  The grammar $\Sigma$ can be viewed now as a BDD (binary
decision diagram) where each node is labeled by a choice variable
$b_j$, and each leaf by a completely specified program $P_2$.  The
search space of the generator is now completely defined by the choice
variables $b_j$, and Rosette uses the SMT solver to generate values
for these Boolean variables such that the corresponding program $P_2$
satisfies $P_1(D_i) = P_2(D_i)$, for all counterexample instances
$D_i$.  This significantly speeds up the choice of the next candidate
$P_2$.

\subsubsection{Using Rosette} To use Rosette, we need to define the
specification and the grammar. A first attempt is to simply define
some grammar for $H$, with the specification
$\Gamma \models \left(G(F(X)) = H(G(X))\right)$.  This does not work,
since Rosette uses the SMT solver to check the identity: as explained
in Sec.~\ref{subsubsec:smt:test}, modern SMT solvers have limitations
that require us to first normalize $G(F(X))$ and $H(G(X))$ before
checking their equivalence.
Even if we modify Rosette
to normalize $H(G(X))$ during verification,
there is still no obvious way to incorporate normalization
into the program generator driven by the SMT solver.
Instead,
we define a grammar $\Sigma$ for $\text{normalize}(H(G(X)))$ rather
than for $H$, and then specify:
\[
\Gamma \models \text{normalize}(G(F(X))) = \text{normalize}(H(G(X)))
\]
Then, we denormalize the result
returned by Rosette, in order to extract $H$, using the {\em
  denormalization} module in Fig.~\ref{fig:arch}, described in
Sec.~\ref{sec:semantic-opt}.  In summary, our \cegis-approach for
FGH-optimization can be visualized as follows:
\begin{align}
  P_{1} \xrightarrow[]{\text{axioms}} \text{normalize}(P_{1}) \xrightarrow[]{\cegis} \text{normalize}(P_{2}) \xrightarrow[]{\text{axioms}} P_{2}
  \label{eq:synthesis:cegis2}
\end{align}
The choice of the grammar $\Sigma$ is critical for the FGH-optimizer.
If it is too restricted, then the optimizer will be limited too, if it
is too general, then the optimizer will take a prohibitive amount of
time to explore the entire space.  We briefly describe our design at a
high level.  Recall that $X$ denotes multiple IDBs, and the query
$G(X)$ may also return multiple intermediate relations. In our system
$G(X)$ is restricted to return a single relation, so we will assume
that $Y = G(X)$ is a single IDB.  The expression $G$ is known to us,
and is a sum-sum-product expression, see
Eq.~\eqref{eq:sum:sum:product},
\begin{align*}
  G(X) = & G_1(X) \oplus \cdots \oplus G_m(X)
\end{align*}
where each $G_i(X)$ is a sum-product expression,
Eq.~\eqref{eq:t:monomial}, using the IDBs $X$ and/or the EDBs.

To generate $\texttt{normalize}(H(G(X)))$, \update{we group its
  sum-products by the number of occurrences of $Y$:}
\begin{align*}
  \texttt{normalize}(H(Y)) = &  H^{(0)} \oplus H^{(1)}(Y) \oplus
  \cdots \oplus H^{(k_{\max})}(Y)
\end{align*}
where $H^{(k)}$ is a sum-sum-product
$H^{(k)} = Q_1 \oplus Q_2 \oplus \cdots$ s.t. each $Q_i$ contains
exactly $k$ occurrences of $Y$, and an arbitrary number of EDBs (it
may not contain the IDBs $X$).  We choose $k_{\max}$ as the largest
number of recursive IDBs $X$ that occur in any rule of the original
program $F(X)$, e.g.,  if the original program was linear, then
$k_{\max} \defeq 1$.  We obtain:
{\footnotesize
\begin{align*}
  & \texttt{normalize}(H(G(X))) = \\
  & \ \ \ \   H^{(0)} \oplus \texttt{normalize}(H^{(1)}(G(X))) \oplus \cdots \oplus \texttt{normalize}(H^{(k_{\max})}(G(X)))
\end{align*}
}
The grammar $\Sigma$ is shown in Fig.~\ref{fig:grammar}.  The start
symbol, $A$, generates a sum matching the expression above.  $A_0$
generates $H^{(0)}$, which is a sum of sum-product terms without any
occurrence of $Y$.  Recall from Sec.~\ref{subsubsec:smt:test} that the
expression $u(z,Q)$ denotes $\bigoplus_z Q$.  $E$ is one of the EDBs,
and $Z$ is a non-terminal for which we define rules
$Z \rightarrow z_1 | z_2 | \cdots | z_m | z_1' | z_2' \cdots$ where
$z_1, \ldots, z_m$ are variables that already occur in
$\texttt{normalize}(G(F(X)))$, and $z_1', z_2', \ldots$ is some fixed
set of fresh variable names. $A_k$ generates
$\texttt{normalize}(H^{(k)}(G(X)))$, which is a sum of sum-products,
each with exactly $k$ occurrence of $Y$.  \update{As stated in
  Fig.~\ref{fig:grammar}, the rules for $A_k$ are incorrect.  For
  example consider $A_1$:}
the $m$ non-terminals $A_{11}, \ldots, A_{1m}$ \update{should have
  identical derivations, instead of being expanded independently}.
For example, assume $G = G_1 \oplus G_2$ (thus $m=2$) and we want $H$
to be one of $E_1 \otimes Y$ or $E_2 \otimes Y$ or $E_3 \otimes Y$.
Then, $\texttt{normalize}(H(G(X)))$ can be one of the following three
expressions $E_1\otimes G_1 \oplus E_1 \otimes G_2$ or
$E_2 \otimes G_1 \oplus E_2 \otimes G_2$ or
$E_3 \otimes G_1 \oplus E_3 \otimes G_2$.  However, the grammar
\update{$A_1 \rightarrow A_{11}\oplus A_{12}$} also generates
incorrect expressions $E_1 \otimes G_1 \oplus E_2 \otimes G_2$,
\update{because $A_{11}, A_{12}$ can choose independently the IDB
  $E_1, E_2$, or $E_3$}.  We fix this by exploiting the choice
variables in Rosette: we simply use the same variables in
$A_{11}, A_{12}, \ldots$ ensuring that all these non-terminals make
exactly the same choices.  We note that our current system is
restricted to linear programs, hence $k_{\max}=1$.

\begin{figure}
\fcolorbox{black}{light-gray}{\parbox{0.2\textwidth}{
\footnotesize
%
\begin{align*}
A \rightarrow \ & \parbox{15mm}{\mbox{$A_0 \oplus A_1 \oplus \cdots \oplus A_{k_{\max}},$}} \\
A_0 \rightarrow \ & \parbox{15mm}{\mbox{$Q_0 \mid Q_0 \oplus A_0,\ \ Q_0 \rightarrow \ u(Z,Q_0) \mid Q_0 \otimes Q_0 \mid E(Z,Z,\cdots, Z),$}}\\
A_1\rightarrow  & \parbox{15mm}{\mbox{$A_{11} \oplus \cdots \oplus A_{1m},\ \ A_2\rightarrow  \  A_{211} \oplus \cdots \oplus A_{2mm}, \ \ A_3 \rightarrow A_{3111}\oplus \ldots$}}\\
A_{1i} \rightarrow \ & Q_{1i} \mid Q_{1i} \oplus A_{1i}, & Q_{1i} \rightarrow \ & u(Z,Q_{1i}) \mid Q_{1i} \otimes Q_0 \mid G_i(X), &i=&1,m\\
A_{2ij} \rightarrow \ & Q_{2ij} \oplus A_{2ij}, & Q_{2ij} \rightarrow \ & u(Z,Q_{2ij}) \mid Q_{1i} \otimes G_j(X), &i,j=&1,m\\
A_{3ij\ell} \rightarrow \ & Q_{3ij\ell} \oplus A_{3ij\ell}, &Q_{3ij\ell} \rightarrow \ & u(Z,Q_{3ij\ell}) \mid Q_{2ij} \otimes G_\ell(X), &i,j,\ell=&1,m
  \end{align*}
}}
\caption{Grammar $\Sigma$ for $\texttt{normalize}(H(G(X)))$, for $k_{\max}=3$.}
  \label{fig:grammar}
\end{figure}

\subsubsection{Discussion}\label{sec:grammar}\ \update{Even though our grammar is
  restricted to $k_{\max}=1$,} it is more complex than
Fig~\ref{fig:grammar}, in order to further reduce the search space.
We use more non-terminals to better control which variables $z$ can be
used where, and we also consider the choice of including entire
subexpressions that occur in the original program $P_1$, since they
are often reused in the optimized program. The synthesizer would
require many trials to find them, had we not included them explicitly.

\section{Equality Saturation}
\label{sec:semantic-opt}


\begin{figure}
  \includegraphics[width=0.45\linewidth]{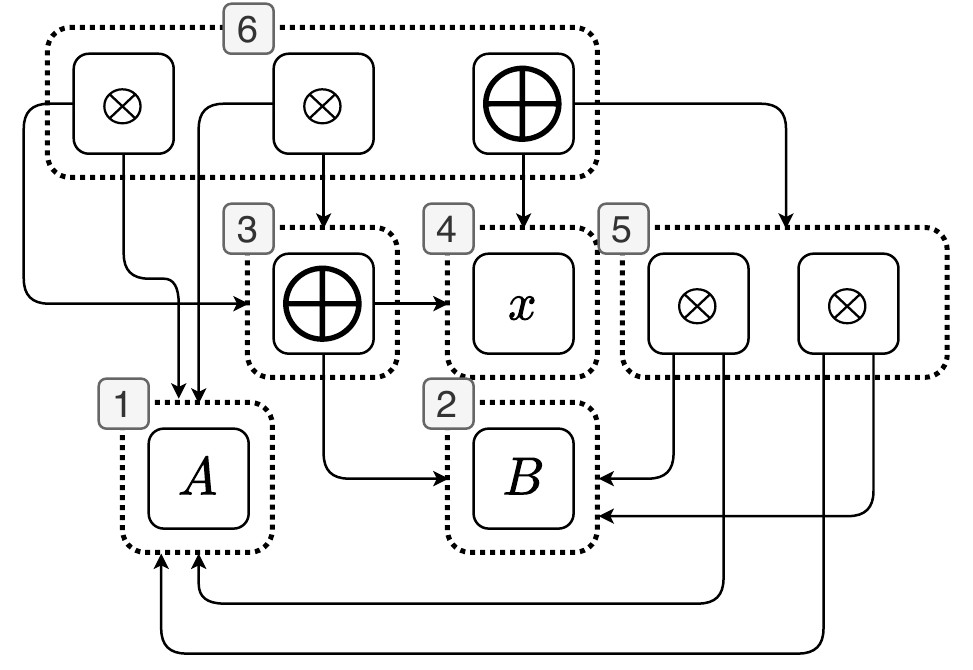}
  \caption{Example \egraph.}\label{fig:eqsat}
\end{figure}

Throughout the FGH-optimizer we need to manipulate expressions, apply
rules, and manage equivalent expressions.  This problem is common to
all query optimizers.  Instead of implementing our own expression
manager, we adopt a state-of-the-art rewriting system dubbed
Equality Saturation (\eqsat).
Specifically, we used
EGG~\cite{DBLP:journals/pacmpl/WillseyNWFTP21} to implement the green
boxes in the architecture shown in Fig.~\ref{fig:arch}.

An \eqsat\ system maintains a data structure called an
\egraph\ that compactly represents a set of expressions, together with
an equivalence relation over this set.  Each \egraph\ consists of a set
of \eclasses, each \eclass\ consists of a set of \enodes, and each
\enode\ is a function symbol with \eclasses\ as children.
Figure~\ref{fig:eqsat} shows an \egraph\ representing the two
expressions in Eq.~\eqref{eq:axiom:pullmul}, their subexpressions, and
other equivalent expressions.  Each \eclass\ (dotted box) represents a class of
equivalent expressions.  For example \eclass\ 5 represents
$A \otimes B$ and $B \otimes A$, which are equivalent by
commutativity.  \eclass\ 6 represents four equivalent expressions
(including the two choices in \eclass\ 5).

The \eqsat\ system maintains separately a collection of {\em rules},
each represented by a pair of patterns.  For example, one rule may
state that $\otimes$ is commutative: $x \otimes y = y \otimes x$.  The
\egraph\ can efficiently add a new expression to its collection,
insert a new rule, and match a given expression against the \egraph.

We describe how we use EGG in the FGH-optimizer.  First, we use it to
extend the Rule-based test (Sec.~\ref{sec:axiomatic-rewrite}) to
account for a constraint $\Gamma$.  By design, the \egraph\ makes it
easy to infer the equivalence $P_1 = P_2$ from a set of rules.
Suppose we want to check such an equivalence conditioned on $\Gamma$.
We may assume w.l.o.g.\ that $\Gamma$ is a logical implication,
$\Delta \Rightarrow \Theta$ since all database constraints are
expressed this way.  We convert it into an equivalence
$\Delta \wedge \Theta = \Delta$, and insert it into the \egraph, then
check for equivalence $P_1 = P_2$.

Second, we use the \egraph\ to denormalize an expression.  More
precisely, recall from Sec.~\ref{subsec:rule:based:synthesis} that we
attempt to synthesize $H$ by denormalizing
$P_1 \defeq \text{normalize}(F(G(X)))$, in other words, writing it in
the form $H(G(X))$.  For that we add $G(X)$ to the \egraph, observe in
which \eclass\ it is inserted, and replace that \eclass\ with a new
node $Y$.  The root of the new \egraph\ represents many equivalent
expressions, and each of them is a candidate for $H$.  We choose the
expression $H$ that has the smallest AST {\em and} does not have any
occurrence of the IDBs $X$.

Finally, we use the \egraph\ to infer the loop invariants.  We do this
by symbolically executing the recursive program $F$ for up to 5
iterations, and compute the symbolic expressions of the IDBs $X$:
$X_0, X_1, \ldots$  Using an \egraph\ we represent all identities
satisfied by these (distinct!) expressions.  The identities that are
satisfied by every $X_i$ are candidate loop invariants: for each of
them we use the SMT solver to check if they satisfy
Eq.~\eqref{eq:invariant:1} from 
Sec.~\ref{subsec:loop:invariants}.


\begin{figure*}
{\footnotesize
  \begin{center}
    \begin{tabular}{|l|l|l|l|l|l|} \hline
      Program & Synthesis Type & Constraint? & Invariant? & Dataset & \required{Size (\# ops)} \\ \hline
      Beyond Magic (BM) & rule-based & No & Yes & twitter, epinions, wiki & \required{6} \\ \hline
      Connected Components (CC) & rule-based & No & No & twitter, epinions, wiki & \required{6} \\ \hline
      Single Source Shortest Path (SSSP) & rule-based & No & No & twitter, epinions, wiki & \required{17} \\ \hline
      Sliding Window Sum (WS) & \cegis\ & No & Yes & Vector of Numbers & \required{15} \\ \hline
      Betweenness Centrality (BC) & \cegis\ & No & No & Erdős–Rényi Graphs & \required{43} \\ \hline
      Graph Radius (R) & \cegis\ & Yes & Yes & Random Recursive Trees & \required{12} \\ \hline
      Multi-level Marketing (MLM) & \cegis\ & Yes & Yes & Random Recursive Trees & \required{6} \\ \hline
    \end{tabular}
  \end{center}
}
  \caption{Experimental Setup}
  \label{fig:setup}
\end{figure*}

\begin{figure*}
  \begin{subfigure}[b]{0.33\textwidth}
    \centering
    \includegraphics[width=\textwidth]{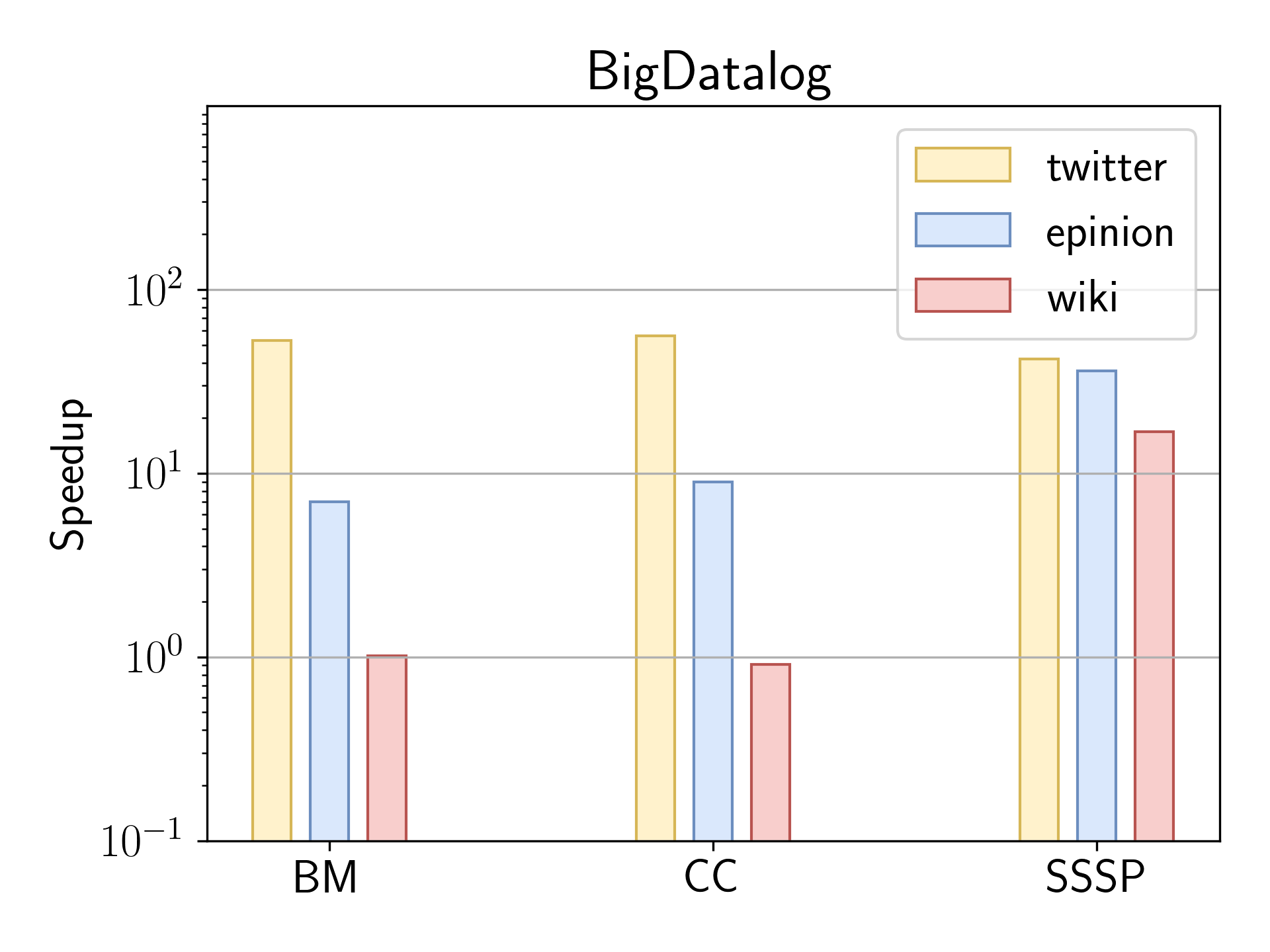}
  \end{subfigure}
  \hfill
  \begin{subfigure}[b]{0.33\textwidth}
    \centering
    \includegraphics[width=\textwidth]{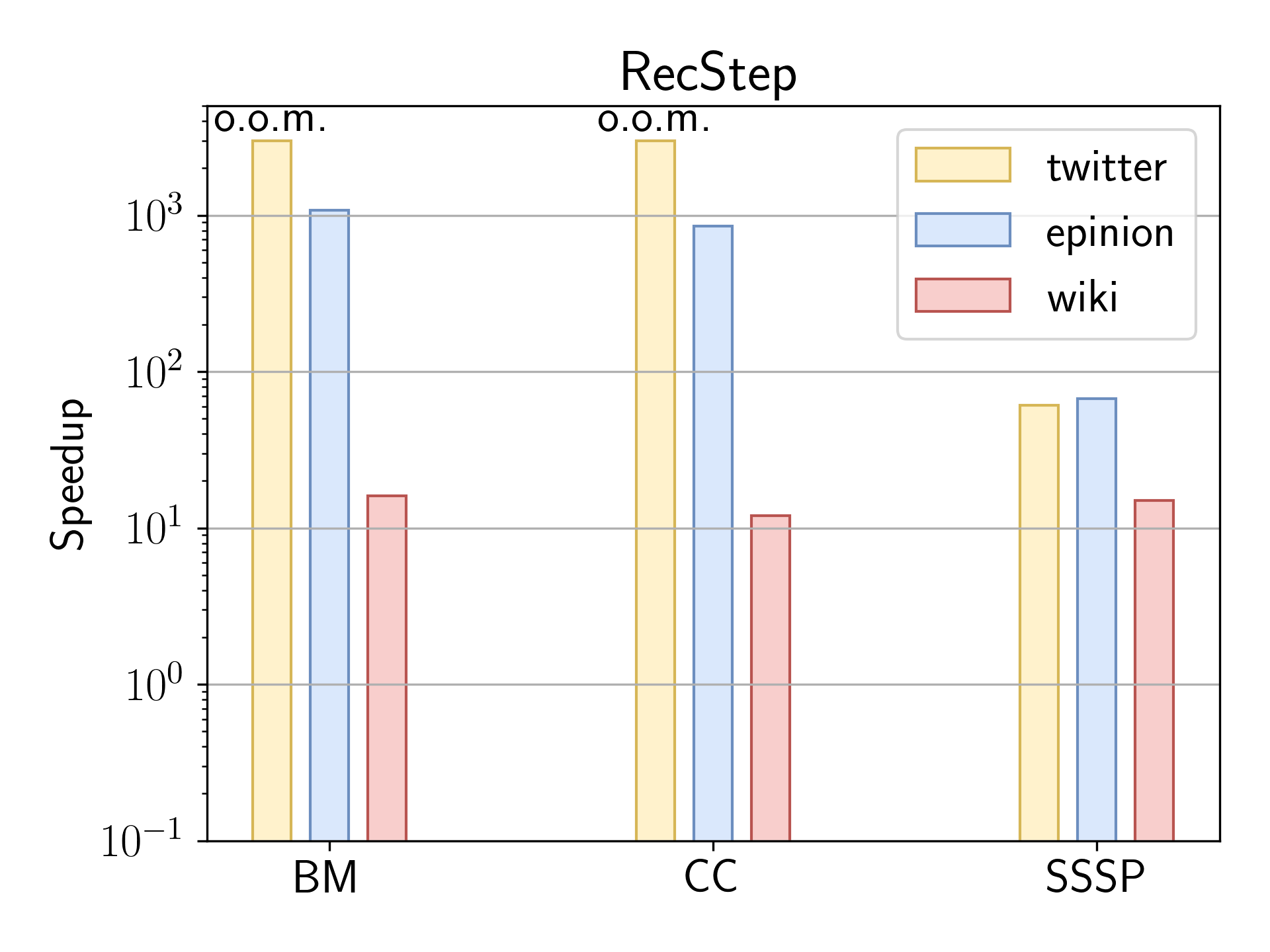}
  \end{subfigure}
  \hfill
  \begin{subfigure}[b]{0.33\textwidth}
    \centering
    \includegraphics[width=\textwidth]{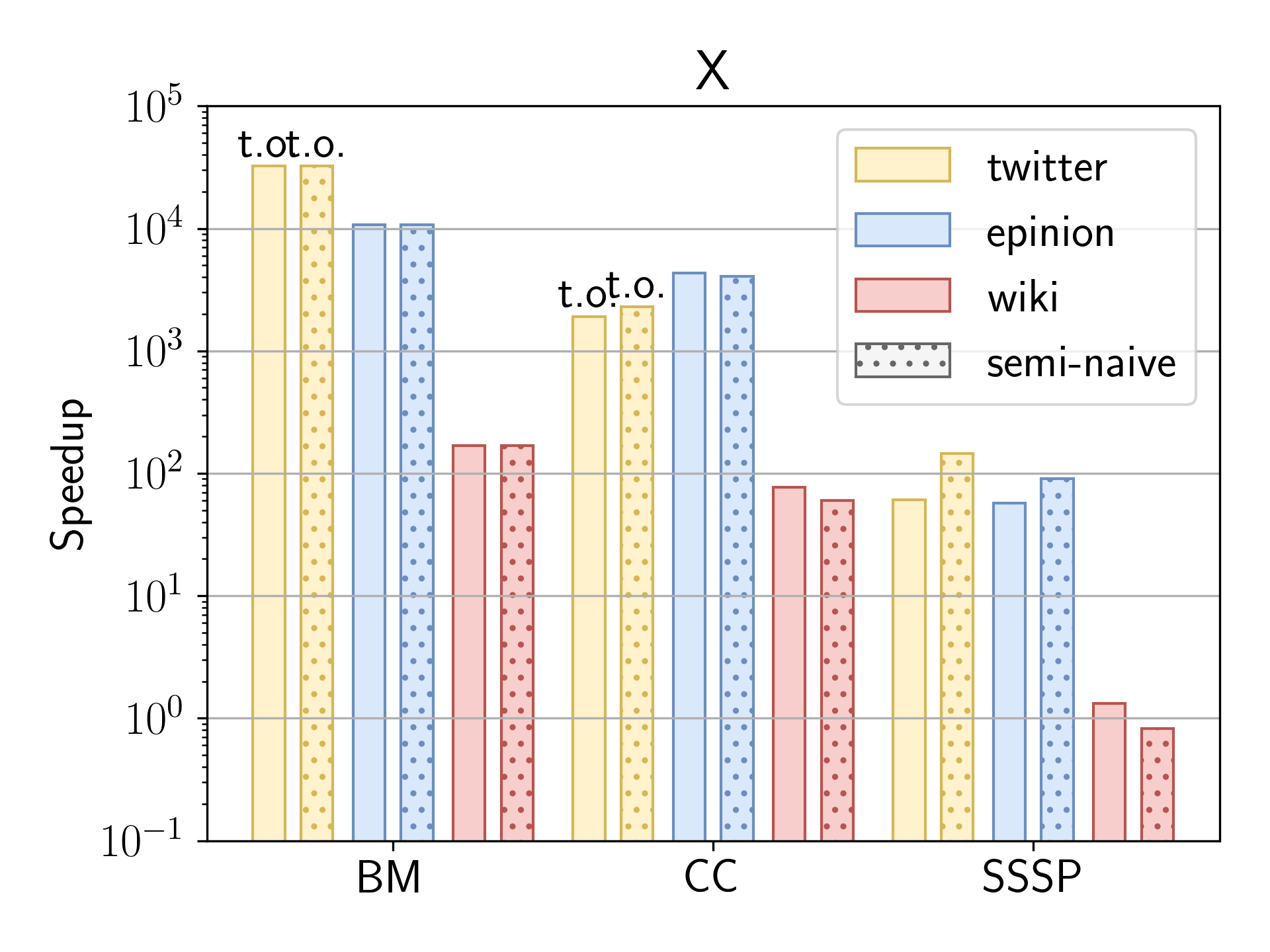}
  \end{subfigure}
  \caption{Speedup of the optimized v.s.\ original program; higher is better; 
  \required{t.o. means the original program timed out after 3 hours, in which case we report the speedup against 3 hours; o.o.m. means the original program ran out of memory.}}\label{fig:eval:eqsat}
\end{figure*}

\begin{figure*}
    \centering
    \includegraphics[width=\textwidth]{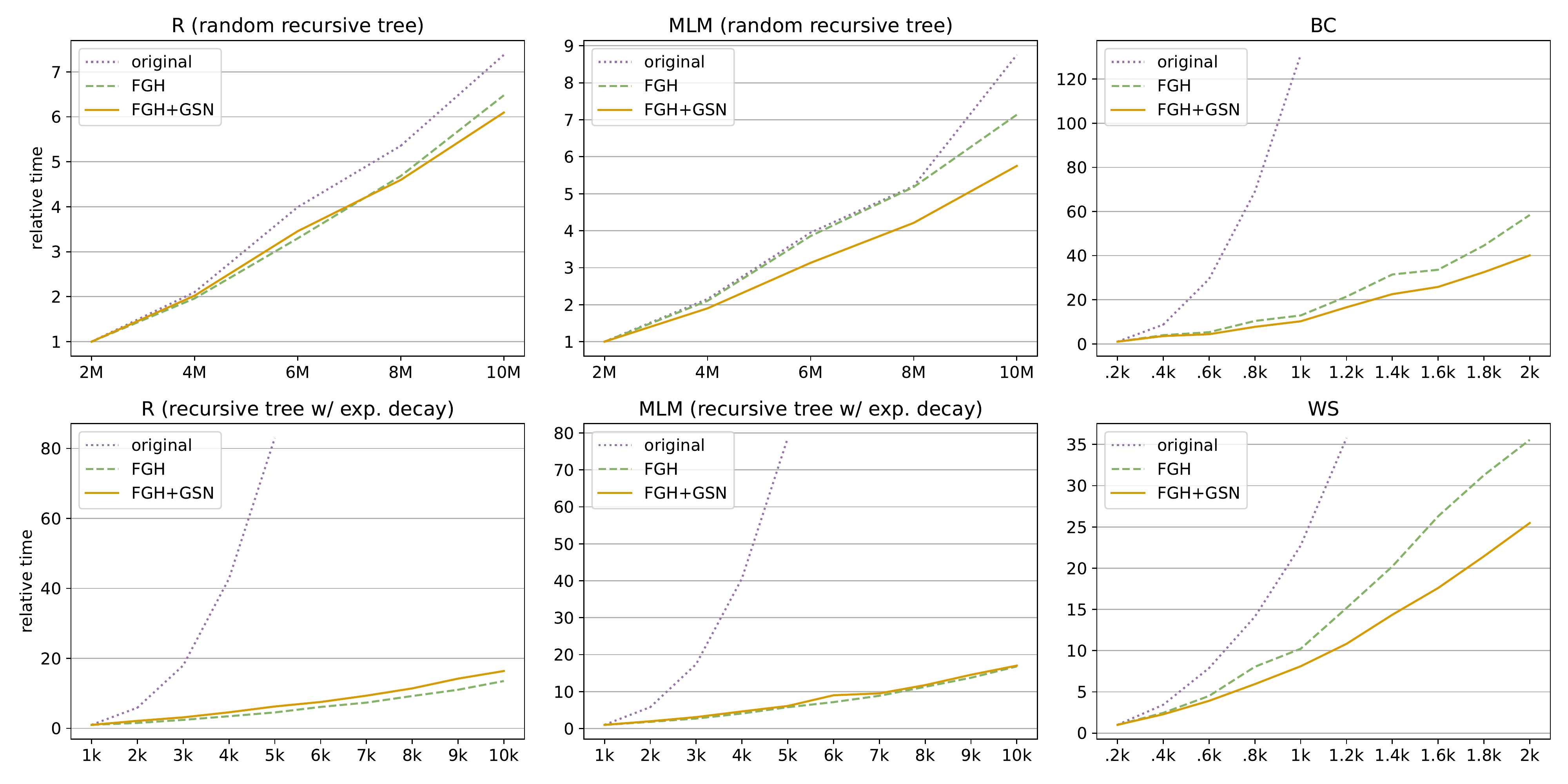}
    \caption{Runtime increase as a function of the data size; lower is  better.}\label{fig:eval:hard}
\end{figure*}

\begin{figure*}
  \centering
{\footnotesize
 \begin{tabular}{ c | c c c | c c c c }
 Program & BM & CC & SSSP & R & MLM & BC & WS \\
 \hline
 Invariance inference & 0.092 & 0 & 0 & 0.129 & 0.132 & 0 & 0\\
 Synthesis & 0.004 & 0.005 & 0.004 & 0.284 & 0.299 & 1.2 & 0.821 \\
 \hline
 Total & 0.096 & 0.005 & 0.004 & 0.413 & 0.431 & 1.2 & 0.821 \\
 Opt. / Exec. (max-min) & .82\% - .16\% & .04\%-.01\% & .24\%-.002\% & .41\%-.07\% & .76\%-.09\% & 6.3\%-.51\% & 7.4\%-.66\%
\end{tabular}
\qquad
\begin{tabular}{ c | c c c c }
 Program & R & MLM & BC & WS \\
 \hline
 Search space & 10 & 20 & 132 & 94 \\
\end{tabular}
}
  \caption{Optimization time in seconds, 
  \required{optimization time over execution time,}
   and  size of the search space. }
  \label{fig:synthesis:time:space}
\end{figure*}

\section{Evaluation}\label{sec:eval}

We implemented a source-to-source FGH-optimizer, based on
Fig.~\ref{fig:arch}.  The input is a program $\Pi_1$, given by $F, G$,
and a database constraint $\Gamma$, and the output is an optimized
program $H$.  We evaluated it on three Datalog systems, and several
programs from benchmarks proposed by prior
research~\cite{BigDatalog, DBLP:journals/pvldb/FanZZAKP19}; we also
propose new benchmarks that perform standard data analysis tasks.  We
did not modify any of the three Datalog engines.  We asked two major
questions:
\begin{enumerate}
\item How effective is our source-to-source optimization, given that
  each system already supports a range of optimizations?
\item How much time does the actual FGH optimization take?
\end{enumerate}

\subsection{Setup}


There is a great number of commercial and open-source Datalog engines in the
wild, but only a few support aggregates in recursion.
We were able to identify five major systems with such support:
SociaLite~\cite{DBLP:journals/tkde/SeoGL15},
Myria~\cite{10.14778/2824032.2824052}, the DeALS family of systems
(DeALS~\cite{DBLP:conf/icde/ShkapskyYZ15},
BigDatalog~\cite{DBLP:conf/sigmod/ShkapskyYICCZ16}, and
RaDlog~\cite{DBLP:conf/sigmod/0001WMSYDZ19}),
RecStep~\cite{DBLP:journals/pvldb/FanZZAKP19}, and
Dyna~\cite{francislandau-vieira-eisner-2020-wrla}.
Prior work~\cite{BigDatalog} reports SociaLite and Myria are consistently slower than
newer systems, so we do not include them in our experiments.
Dyna is designed to experiment with novel language semantics and
not for data analytics, and we were not able to run our benchmarks
without errors using it.
Systems in the DeALS family are similar to each other;
we pick BigDatalog because it is open source
and runs our benchmarks without errors;
we include RecStep for the same reasons.
Both BigDatalog and RecStep are multi-core systems.
Finally, we run experiments on an unreleased commercial system X,
which is single core.
As we shall discuss, X is the only one that supports all features for
our benchmarks.

We conducted all experiments on a server running CentOS 8.3.2011.
The server has a total of 1008GB memory, and
4 Intel Xeon CPU E7-4890 v2 2.80GHz CPUs, 
each with 15 cores and 30 threads. 
We ran seven
benchmarks, shown in Fig~\ref{fig:setup}. BM and CC are
Examples~\ref{ex:more:magic} and ~\ref{ex:fgh:cc}; MLM is basically
Example~\ref{ex:fgh:constraints}.  CC, SSSP and MLM are
from~\cite{BigDatalog}, the others are designed by us.  R and MLM
require a database constraint stating that the data is a tree.  BM,
R, and MLM each have a non-trivial loop invariant that is inferred by the
optimizer.  
\required{Our optimizer requires each program to consist of two rules,
one each for $F$ and $G$, and so a meaningful metric for
program size is the number of semiring operations.
These numbers are listed in the last column of Fig~\ref{fig:setup}.
Our benchmark programs are comparable in size to those used in prior work~\cite{BigDatalog, DBLP:journals/pvldb/FanZZAKP19}.}
All programs are available in our git repository.  The
real-world datasets twitter~\cite{mcauley2012learning}, epinions~\cite{richardson2003trust}, and
wiki~\cite{leskovec2010signed} are from the popular SNAP collection~\cite{snapnets}.  We
follow the setting in~\cite{BigDatalog,DBLP:journals/pvldb/FanZZAKP19}
when generating the synthetic graphs.
We additionally generate random recursive trees with an exponential
decay, modeling the decay of association in multi-level
marketing~\cite{emek2011mechanisms}.
For WS, we input the vector
$[1, \ldots, n]$, since the values of the entries do not affect run time.
In general, we used smaller datasets
than~\cite{BigDatalog,DBLP:journals/pvldb/FanZZAKP19} because some of
our experiments run single-threaded.

\subsection{Run Time Measurement}


For each program-dataset pair, we measure the run times of three
programs: original, with the FGH-optimization, and with the
FGH-optimization and the generalized semi-naive (GSN, for short)
transformation.  We report only the speedups relative to the original
program in Fig.~\ref{fig:eval:eqsat} and~\ref{fig:eval:hard}.  In some
cases the original program timed out our preset limit of 3 hours, 
\required{where we report the speedup against the 3 hours mark.
In some other cases the original program ran out of memory and we mark them with ``o.o.m.'' in the figure.
}  The {\em absolute} runtimes are irrelevant
for our discussion, since we want to report the effect of {\em adding}
our optimizations.  (We also do not have permission to report the
runtimes of X.)  All three systems already perform semi-naive
evaluation on the original program, since that is expressed over the
Boolean semiring.  But the FGH-optimized program is over a different
semiring (except for BM), and GSN has non-stratifiable rules with
negation, which are supported only by system X; we report GSN only for
system X.  While the benchmarks in Fig.~\ref{fig:eval:eqsat} were on
real datasets, those in Fig.~\ref{fig:eval:hard} use synthetic data,
for multiple reasons: we did not have access to a good tree dataset
needed in the R and MLM benchmarks, BC timed out on our real data (BC
is computationally expensive), and WS uses only a simple array.  A
benefit of synthetic data is that we can report how the optimizations
scale with the data size.  Unfortunately, the FGH-optimized programs
in Fig.~\ref{fig:eval:hard} require recursion with \textsf{SUM}
aggregation, which is not supported by BigDatalog or RecStep; this is
in contrast with those in Fig.~\ref{fig:eval:eqsat}, which require
recursion with \textsf{MIN} aggregation which is supported by all
systems.


\subsubsection{Findings} Figure~\ref{fig:eval:eqsat} shows the results
of the first group of benchmarks optimized by the
rule-based synthesizer.  Overall, we observe our optimizer provides
consistent and significant (up to 4 orders of magnitude) speedup across
systems and datasets.  Only a few datapoints indicate the optimization
has little effect: BM and CC on wiki under BigDatalog, and SSSP on
wiki under X.  This is due to the small size of the wiki dataset: both
the optimized and unoptimized programs finish very quickly, so the run
time is dominated by system overhead which cannot be optimized away.
We also note that (under X) GSN speeds up SSSP but slows down CC (note the log scale).
The latter occurs because the $\Delta$-relations for CC are very large,
and as a result the semi-naive evaluation has the same complexity as
the naive evaluation; but the semi-naive program is more complex and
incurs a constant slowdown.  GSN has no effect on BM because the
program is in the boolean semiring, and X already implements the
standard semi-naive evaluation.
Optimizing BM with FGH on BigDatalog sees a significant speedup even though the
systems already implements magic set rewrite,
because the optimization depends on a loop invariant.\footnote{
BigDatalog can optimize the left-recursive version of BM~\eqref{eq:simple:magic:nonopt}
to obtain similar speedup, via the classic magic set rewrite.}
Overall, both the semi-naive and
naive versions of the optimized program are significantly faster than
the unoptimized program.

Figure~\ref{fig:eval:hard} shows the results of the second group of
benchmarks, which required \cegis.  Since we used synthetic data,  we
examined here the asymptotic behavior of the optimization as a
function of the data size.
The most advanced optimization was for BC, which leads essentially to
Brandes' algorithm~\cite{brandes2001faster}: its effect is dramatic.
R and MLM rely on semantic optimization for a tree.  We generated two
synthetic trees, a random recursive tree with expected depth of
$O(\log n)$ and one with exponential decay with expected depth of
$O(n)$.  Since the benefit of the optimization depends on the depth,
we see a much better asymptotic behavior in the second case.  Here,
too, the optimizations were always improving the runtime.

%
%
%


\subsection{Optimization Time and the Size of the Search Space}

\cegis\ can quickly become very expensive if its search space is
large, and, for that reason, we have designed the grammar generator
carefully to reduce the search space without losing generality.
Fig.~\ref{fig:synthesis:time:space} reports the runtime of the
synthesizer (in seconds) for both rule-based synthesis and \cegis, and the size
of the search space.  The rule-based synthesizer runs in
milliseconds, while \cegis\ took over 1s for BC (our hardest
benchmark).  These numbers are close to those demanded by modern query
optimizers, and represent only a tiny portion of the total runtime
of the optimized query. 
\required{
Optimization time takes less than 1\% of the query run time for all benchmarks except for BC and WS on the smallest input data. }
To our surprise, our grammar managed to
narrow the search space considerably, to no more than 132 candidates,
which (in hindsight) explains the low optimization times. 
\required{
The search space can grow rapidly, and even exponentially, 
as the size of the input program grows.
Our optimizer optimizes a single stratum at a time, 
focusing on improving critical ``basic blocks'' of a program.
Our benchmark programs demonstrate a wide range of data analysis computation can
be expressed succinctly using just a few semiring operations, 
and optimization can have a dramatic impact on performance.}

\subsection{Summary}

We conclude that our optimizer can significantly speedup already
optimized  Datalog  systems, either single-core or multi-core.  GSN
can, sometimes, further improve the runtime.  We achieved this using a
rather small search space, which led to fast optimization.

%

%


\section{Conclusion}

We have presented a new optimization method for recursive queries,
which generalizes many previous optimizations described in the
literature.  We implemented it using a \cegis\ and an \eqsat\ system.
Our experiments have shown that this optimization is beneficial,
regardless of what other optimizations a Datalog system supports.  We
discuss here some limitations and future work.

\update{
  Our current implementation is restricted to linear programs, but our
  techniques apply to nonlinear programs as well. Non-linear programs
  require a more complex grammar $\Sigma$; this is likely to increase
  the search space, and possibly increase the optimization time. We
  leave this exploration to future work.
}

Our current optimizer is heuristic-based, and future work needs to
integrate it with a cost model.  This, however, will be challenging,
because very little work exists for estimating the cost of recursive
queries.  
\update{
  This paper applies a simple cost-model. We use the arity of the IDB
  predicate as a proxy for a simple asymptotic cost model, because
  $N^{\text{arity}}$ is the size bound of the output, when $N$ is the
  size of the active domain. This simple cost-model is currently used
  by the commercial DB system mentioned in the paper. If the optimized
  program reduces the arity, then it is assessed to have lower cost.
}

\update{
  Two limitations of our current implementation are the fact that we
  currently do not ``invent'' new IDBs for the optimized query, and do
  not apply the FGH-optimizer repeatedly.  Both would be required in
  order to support more advanced instances of magic set optimizations.
}

Our initial motivation for this work came from a real application,
which consists of a few hundred Datalog rules that were
computationally very expensive, and required a significant amount of
manual optimizations.  Upon close examination, at a very high level,
the manual optimization that we performed could be described,
abstractly, as a {\em sliding window} optimization (WS in
Fig.~\ref{fig:setup}), which is one of the simplest instantiations of
the FGH-rule.  Yet, our current system is far from able to optimize
automatically programs with hundreds of rules: we leave that for
future work.

\begin{acks}
Suciu and Wang were partially supported by NSF IIS 1907997 and NSF IIS 1954222.
Pichler was supported by the Austrian Science Fund (FWF):P30930.
\end{acks}

\balance
\bibliographystyle{ACM-Reference-Format}
\bibliography{main}

\newpage
\appendix

\onecolumn
\section{Grammar Refinements}
As discussed in~\Cref{sec:grammar}, we implement further
refinements of the grammar in~\Cref{fig:grammar} to 
limit the search space. 
First, we allow the user to specify a type for each attribute
of a relation. 
For example, a weighted edge relation 
$E(x: \text{ID}, y: \text{ID}, z: \text{int})$
has 2 attributes of type ID and the last attribute 
typed int. 
Although they all share the same concrete type 
(think ``machine type''), the abstract types help
guide the synthesizer to never use the same variables
for attributes of different types.
Second, we allow the user to define helper functions.
One example is the definition of $D$ in~\Cref{fig:bench:bc}
which computes the distance between two vertices.
$D$ is used repeatedly in the remaining definition 
in~\Cref{fig:bench:bc}, and it is a common practice 
for programmers to abstract out such recurring patterns.
We leverage good practice like this to aid synthesis:
if a user defines a helper function with head relation 
$R$, we will include $R$ as a base relation in our grammar
because it is likely that the helper function is also helpful 
in the optimized query.
When generating a candidate, we simply inline the definition
of the helper function, thereby including an entire 
sub-expression of the original query. 
Concretely, the refinements augment the grammar 
in~\Cref{fig:grammar} with types and adds the following 
cases to the production rule of $Q_0$: 
\[Q_0 \rightarrow \cdots \mid Z_{int} \mid F(Z, Z, \ldots, Z)\]
where $Z_{int}$ are variables of type int and $F$ is
a user-defined helper function.
\section{Benchmark Programs}

\Cref{fig:bench:ws,fig:bench:bc,fig:bench:mlm,fig:bench:r,fig:bench:bm,fig:bench:cc,fig:bench:sssp} contain the benchmark queries used in
the experiments.
In each query, $V$ is the set of input vertices and 
$E$ the set of input edges. 
A binary edge relation is unweighted, and 
a ternary edge relation is weighted with the weight 
in the third position.
The domain of all relations is integers.
Each query outputs the head relation of the last rule.
\begin{figure}[H]
  \begin{align*}
  TC(x, y) & = V(x) \wedge [x=y] \vee \exists t(TC(x, t) \wedge E(t, y))\\
  R(y) & = TC(\texttt{a}, y)
  \end{align*}
  \caption{Beyond Magic (BM). \texttt{a} is a constant 
  vertex ID, chosen uniformly at random during 
  experiments.}
  \label{fig:bench:bm}
\end{figure}
\begin{figure}[H]
  \begin{align*}
    TC(x, y) & = V(x) \wedge [x=y] \vee \exists t(TC(x, t) \wedge E(t, y))\\
    SCC[x] &= \min_v \setof{v}{TC(x, v)}
  \end{align*}
  \caption{Connected Components (CC). Note the vertex ID 
  itself is used as the label for that vertex, instead of
  $L(v)$ in~\cref{fig:cc}.}
  \label{fig:bench:cc}
\end{figure}
\begin{figure}[H]
  \begin{align*}
  D(x,d) & = [x=\texttt{a}] \wedge [d=0] \\
  & \vee \exists(y,d_1, d_2: D(y, d_1) \wedge E(y, x, d_2) \wedge [d=d_1+d_2]) \\
  SP[x] & = \min_d \setof{d}{D(x,d)}
  \end{align*}
  \caption{Single-source Shortest Paths (SSSP).
  \texttt{a} is a constant 
  vertex ID, chosen uniformly at random during 
  experiments.}
  \label{fig:bench:sssp}
\end{figure}
\begin{figure}[H]
  \begin{align*}
  W(t,j,w) &= A(j,w) \wedge [t=j] \\
  &\vee \exists(s : [t=s+1] \wedge W(s,j,w) \wedge [1 \leq j < t]) \\
  P[t] &= \sum_{j,w} \setof{w}{W(t,j,w)} \\
  S[t] &= P[t] - P[t-10]
  \end{align*}
  \caption{Window Sum (WS) with a window size of 10.
  $A$ is an array, and $A(i, v)$
is true when $A$ holds $v$ at index $i$, and 
arrays are 1-indexed.}
  \label{fig:bench:ws}
\end{figure}
\begin{figure}[H]
  \begin{align*}
  D(s,t,k) &= V(x) \wedge [s=t] \wedge [k=0]\\
  &\vee [k=1+\min_{v,l} \setof{l}{E(v, t) \wedge [s \neq t] \wedge D(s,v,l)}] \\
  \sigma(s, t, n) &= V(s) \wedge [s=t]  \wedge [n=1] \\
  & \vee E(v, t) \wedge D(s,v,d_{sv}) \wedge D(s,t,d_{st}) \\
  &\wedge [d_{sv} = d_{st} + 1] \wedge [s \neq t] \wedge \sigma(s,v,m)\\
  B[v] &= \sum_{s, t, b} 
  \{\sigma_{sv} \times \sigma_{vt} / \sigma_{st} \mid
  [s\neq t] \wedge [s\neq v] \wedge [t\neq v] \\
  &\wedge D(s, t, d_{st}) \wedge 
  D(s, v, d_{sv}) \wedge D(v, t, d_{vt}) 
  \wedge [d_{st} = d_{sv} + d_{vt}] \\
  & \wedge \sigma(s, v, \sigma_{sv}) \wedge \sigma(v, t, \sigma_{vt}) \wedge 
  \sigma(s, t, \sigma_{st})\}
  \end{align*}
  \caption{Betweenness Centrality (BC).
  Intuitively, $D$ computes the distance between two 
  vertices, and $\sigma$ computes the number of 
  shortest paths between two vertices.}
  \label{fig:bench:bc}
\end{figure}
\begin{figure}[H]
  \begin{align*}
  TC(x, y, w) &= V(x) \wedge [x=y] \wedge [w=0] \\
  &\vee \exists(z, w_1: TC(x, z, w_1) \wedge E(z, y) \wedge [w = w_1 + 1]) \\
  SP[x, y] &= \min_w \setof{w}{ TC(x, y, w)} \\
  R[x] &= \max_y \setof{SP[x,y]}{}
  \end{align*}
  \caption{Graph Radius (R).
  $R[x]$ computes the length of the longest shortest-path 
  between $x$ and any other vertex in the graph. Intuitively, it is the diameter with one vertex fixed.}
  \label{fig:bench:r}
\end{figure}
\begin{figure}[H]
  \begin{align*}
  TC(x, y) &= V(x) \wedge [x=y] \vee \exists(z : TC(x, z) \wedge E(z, y)) \\
  M[x] &= \sum_v \setof{v}{TC(x, v)}
  \end{align*}
  \caption{Multi-level Marketing (MLM).
  Intuitively, each vertex $v$ represents a participant 
  who makes $v$ amount of profit; the query $M[x]$
  computes the total profit of the sub-network under 
  participant $x$.}
  \label{fig:bench:mlm}
\end{figure}
\section{Magic Set Optimization}
\label{app:magic:sets}

We describe here a general form of magic set optimization and show
that its correctness can be proven by a verifier using only three
rules: the FGH rule, the Stratification Rule, and the Fixpoint rule,
described below.  Each of the rewrite rules can be proven by our
verifier, but our synthesizer cannot synthesize the magic rewritings
in general; it is currently restricted to relatively simple
rewritings, like those illustrated earlier in the paper.

\paragraph*{\bf Notations} In this section we restrict the discussion
to the Boolean semiring and monotone functions.  When we write $F(X)$
we assume that $X$ is a tuple of IDB relations, e.g. $X = (R, S, T)$,
and $F(X)$ returns a tuple of relations of the same arities.  A
datalog program has the form $X \cd F(X)$, and we denote by $\bar X$
its least fixpoint.  Given two tuples of relations $X,Y$ of the same
type, i.e. the same number of relations and of the same arities, we
write $X \Rightarrow Y$ to mean component-wise set inclusion.  For
example, if $X = (R_1, S_1, T_1)$ and $Y = (R_2, S_2, T_2)$ then
$X \Rightarrow Y$ means $R_1 \subseteq R_2$, $S_1 \subseteq S_2$,
$T_1 \subseteq T_2$.

\subsection{The Three Rules}

\paragraph*{\bf Simplified FGH Rule} We consider the following
simplified version of the FGH rule.  Given two datalog programs:
\begin{align*}
  \Pi_1:\ X \cd & F(X) & \Pi_2:\ Y \cd H(Y)
\end{align*}
a {\em homomorphism} from $\Pi_1$ to $\Pi_2$ is a function $G$
satisfying $G(\emptyset)=\emptyset$ and $G(F(X))=H(G(X))$ for all $X$.
If such a homomorphism exists, then the FGH-rule in
Theorem~\ref{th:fgh} implies the following:
%
%
%
\begin{align}
  G(\bar X) = & \bar Y \label{eq:fgh:simplified}
\end{align}
The only difference from the FGH-rule in Theorem~\ref{th:fgh} is that
we do not ask for $\Pi_1$ to return $G(X)$.

\paragraph*{\bf Stratification Rule} Assume $K(Z), F(Z,X)$ are two
monotone functions over tuples of relations, such that the output of
$K$ has the same type as $Z$, and the output of $F$ has the same type
as $X$.  Then the following two programs are equivalent:
\begin{align*}
  &
    \begin{array}[t]{rrl}
      \Pi:\ & Z \cd & K(Z) \\
            & X \cd & F(Z,X)
    \end{array} &&
                   \begin{array}[t]{rrll}
                     \Pi': & Z \cd & K(Z) &\mbox{// Stratum 0: let $\bar Z$ be its fixpoint}\\ \cline{2-3}
                           & X \cd & F(\bar Z,X) & \mbox{// Stratum 1: note the use of $\bar Z$}
                   \end{array}
\end{align*}
The program $\Pi$ computes both sets of rules $K, F$ in a single
stratum.  Program $\Pi'$ separates them into two strata: first it
computes the fixpoint $\bar Z$ of $K$, then uses it as an EDB to
compute the fixpoint of $F(\bar Z, X)$.  This stratification rule is
well known for monotone datalog, and we omit the proof.  A formal
statement asserts that, for any $\omega$-continuous functions $K$,
$F$, denoting $L(Z,X) \defeq (K(Z), F(Z,X))$, we have
$\lfp(L) = (\lfp(K), \lfp(\lambda X.F(\lfp(K),X)))$.


\paragraph*{\bf Fixpoint Rule} Consider a datalog program, and suppose
that its stratum $s$ is the following:
\begin{align*}
  X \cd & F(X)
\end{align*}
Let $\bar X$ be the fixpoint of stratum $s$.  Then the following
constraint holds in all strata $s' > s$:
\begin{align}
  F(\bar X) \Rightarrow & \bar X \label{eq:constraint:at:fixpoint}
\end{align}

\subsection{Running Example}

Throughout this section we will illustrate using the following example.

\begin{example} \label{ex:magic:running} We show below a program $\Pi$
  and its magic set optimized program $\Pi_O$.
  \begin{align*}
    &
      \begin{array}[b]{rrl}
    \Pi: & R(x) \cd & V(x) \\
         & R(x) \cd & T(x,y,z) \wedge R(y) \wedge R(z) \\
         & Q(x) \cd & G(x) \wedge R(x)
      \end{array}
&
  \begin{array}[b]{rrl}
    \Pi_O: & Q_0'() \cd & \\
           & R'_O(y) \cd & R'_O(x) \wedge T(x,y,z)\\
           & R'_O(z) \cd & R'_O(x) \wedge T(x,y,z) \wedge R_O(y)\\
           & R_O'(x) \cd & Q'_O() \wedge G(x) \\
           & R_O(x) \cd & R'_O(x) \wedge V(x) \\
           & R_O(x) \cd & R'_O(x) \wedge T(x,y,z) \wedge R(y) \wedge R(z) \\
           & Q_O(x) \cd & Q'_O() \wedge G(x) \wedge R_O(x)
  \end{array}
  \end{align*}
  $\Pi$ computes an IDB $R(x)$, then returns $Q(x)$ which is a
  restriction of $R$.  The optimized program $\Pi_O$ computes the IDB
  $R_O(x)$ only on a subset of the nodes $x$ that are sufficient to
  answer $Q$, namely the set defined by $R'_O(x)$.  The predicate
  $R'_O(x)$ is called the {\em magic predicate}.
\end{example}


\subsection{Definition of Magic Set Rewriting}

We use the elegant definition of magic set rewriting by Mascellani and
Pedreschi~\cite{DBLP:conf/birthday/MascellaniP02}.

\paragraph*{\bf Notation}
An atom $A$ is a predicate symbol followed by variables and/or
constants, e.g. $A$ can be $R('a','b',x,y)$.  Single atoms are denoted
$A,B, \dots $ and sequences (possibly empty) of atoms by
$\bA, \bB, \dots$ Each datalog rule has the form $A \cd \bA$.
Following the convention used in \cite{DBLP:books/aw/AbiteboulHV95}, a
query is given by a datalog program $\Pi$ and a query predicate $Q$
not occurring in $\Pi$, such that $Q$ is defined by a single rule
$r_Q$ of the form $r_Q : Q(\bv) \cd \bA$, where $\bA$ does not contain
the predicate $Q$.

\paragraph*{\bf Modes}
For an $n$-ary relation symbol $R$, a {\em mode\/} is a string
$\set{+,-}^n$.  Intuitively a $+$ represents an input, and a $-$
represents an output.  
%
%
%
Given a datalog program $\Pi$, we fix a moding for each relational
symbol $R$.  The moding can be arbitrary, with a single restriction:
the mode of the output predicate $Q$ must be $(-,-,\cdots,-)$,
i.e. all its positions are output positions.  To simplify the
notations, we will assume w.l.o.g. that, for each relational symbol
$R$ the input positions precede the output positions, i.e. its mode is
$+\cdots+-\cdots-$.  Hence, an atom $A$ of the form
$R(\mathbf{u}, \mathbf{v})$ has input arguments $\mathbf{u}$ and
output arguments~$\mathbf{v}$.

%

\paragraph*{\bf Magic set transformation}
To each relational symbol $R$ we associate two new symbols.  A {\em
  magic symbol} $R'_O$, whose arity is the number of input positions
in the mode of $R$, and an {\em optimized symbol} $R_O$, of the same
arity as $R$.  If $A$ is the atom $R(\mathbf{u}, \mathbf{v})$, then we
denote by $A'_O \defeq R'(\mathbf{u})$ and
$A_O \defeq R_O(\bm u, \bm v)$ the atoms with the corresponding
symbols $R'_O$ and $R_O$ respectively.  Similarly, if $\bm A$ is a
sequence of atoms, then we denote by $\bm A'_O$ and $\bm A_O$ the
corresponding sequences of atoms.  If $R$ is an EDB, then we define
$R_O$ to be the same EDB, and we will often remove the subscript $O$.

\begin{definition}
\label{def:magicsets}
\cite[Definition~3]{DBLP:conf/birthday/MascellaniP02} Let $\Pi$ be a
datalog program with an output predicate $Q$.  The magic set
transformation is the program $\Pi_O$ obtained from $\Pi$ by the
following transformation steps:

\begin{enumerate}
\item \label{item:def:magicsets:1} For every rule in $\Pi$ fix an
  order of the atoms in its body, i.e. the rule becomes:
  \begin{align*}
    r \colon\  A  \cd & B_1\wedge B_2 \wedge \cdots \wedge B_k
  \end{align*}
  For every atom $B_\ell$ above add the new rule:
  \begin{align}
    r'_\ell \colon\ B'_{\ell,O} \cd & A'_O \wedge B_{1,O} \wedge \cdots \wedge B_{\ell-1,O}\label{eq:def:of:r:prime}
  \end{align}
\item \label{item:def:magicsets:2} Add the following rule with an
  empty body (i.e. the body is \texttt{true}): $Q'_O() \cd$.
\item \label{item:def:magicsets:3} Replace each original rule
  $A \cd \bA$ in $\Pi$ by the new rule $A_O \cd A'_O, \bA_O$.
\end{enumerate}
\end{definition}

\begin{example} \label{ex:magic:running:2} In
  Example~\ref{ex:magic:running} we use the modes $R(+)$, and $Q(-)$.
  We associate to the atoms $R(x), Q(x)$ the magic atoms
  $R_O'(x), Q_O'()$ and the optimized atoms $R_O(x), Q_O(x)$.  The
  modes for the EDBs $V, T, G$ can be arbitrary, since the magic
  symbols $V'_O, T'_O, G'_O$ are never used in any rule, and hence
  they were omitted from $\Pi_O$; they are useful for us only to
  simplify the statement of Lemma~\ref{lemma:magic:sets:main:property}
  below, and for that reason we illustrate them here, assuming that
  their modes are $-,-,-$.  Then, according to
  item~\ref{item:def:magicsets:1} of Def.~\ref{def:magicsets}, the
  optimized program should include these rules:
  \begin{align*}
    V'_O() \cd & R'_O(x) \\
    T'_O() \cd & R'_O(x) \\
    G'_O() \cd & Q'()
  \end{align*}
  If we add these rules to the program $\Pi_O$ in
  Example~\ref{ex:magic:running}, then we observe that, at fixpoint,
  $\bar R_O' \neq \emptyset$ (assuming $G \neq \emptyset$), and
  therefore,
  $\bar V'_O() = \bar T'_O() = \bar G'_O() = \texttt{true}$.
\end{example}

\paragraph*{\bf Correctness} We restate here the theorem
from~\cite[Theorem 4]{DBLP:conf/birthday/MascellaniP02}:

\begin{theorem}
\label{theo:magicsets}
Let $\Pi$ be a datalog program with query predicate $Q$.  Fix any
moding of its symbols, and let $\Pi_O$, be the corresponding magic
program. Then, at fixpoint, the IDB $\bar Q$ computed by $\Pi$ equals
the IDB $\bar Q_O$ computed by $\Pi_O$.
\end{theorem}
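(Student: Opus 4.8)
I would prove the two inclusions $\bar Q_O \Rightarrow \bar Q$ and $\bar Q \Rightarrow \bar Q_O$ separately; both reduce to the main property of magic sets (Lemma~\ref{lemma:magic:sets:main:property}), which — stated for the version of $\Pi_O$ augmented with the magic predicates of the EDBs, as in Example~\ref{ex:magic:running:2} — relates each optimized relation at the fixpoint to the original one cut down to its magic set, $\bar R_O = \setof{R(\bm u,\bm v)\in\bar R}{\bm u\in\bar R'_O}$, and identifies $\bar R'_O$ with the set of relevant calls, i.e.\ the fixpoint of the magic rules run against the full relations $\bar R$. Given this, $\bar Q_O \Rightarrow \bar Q$ (soundness) is the easy half: write $\Pi$ as $X \cd F(X)$ with query rule $Q(\bv)\cd\bA$, and note that Definition~\ref{def:magicsets} changes the rules producing an optimized symbol only by prepending the magic atom $A'_O$ to each body, so if $F_O$ denotes the immediate-consequence operator of the optimized rules then $F_O(X',X_O)\Rightarrow F(X_O)$ and the optimized query rule derives a subset of $\bA$ on $X_O$. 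By the Fixpoint Rule $F(\bar X)\Rightarrow\bar X$, and an induction on the stages of $\Pi_O$ (using $F_O\Rightarrow F$ at the step) gives $\bar X_O\Rightarrow\bar X$, hence $\bar Q_O\Rightarrow\bar Q$.

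\textbf{Completeness.} For $\bar Q\Rightarrow\bar Q_O$ I would use Lemma~\ref{lemma:magic:sets:main:property} together with the fact that the output predicate has mode $(-,\ldots,-)$, so its magic atom $Q'_O()$ is $0$-ary and the rule $Q'_O()\cd$ forces it to \texttt{true}. Hence the optimized query rule $Q_O(\bv)\cd Q'_O(),\bA_O$ just evaluates $\bA$ on the optimized relations; by the Lemma these are the original relations restricted to relevant calls; and every call needed to derive a $Q$-fact is, by construction of the magic predicates, in the magic set. Therefore $\bar Q$ is recovered in full, i.e.\ $\bar Q\Rightarrow\bar Q_O$, and combined with soundness this gives $\bar Q=\bar Q_O$.

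\textbf{Proving the Lemma.} The Lemma is where the three rules do the real work; I would obtain it by a short chain of programs, each step one rule application. (i) \emph{Stratification.} Consider the program that computes $X$ by the original rules, then — in a later stratum — the relevant-call predicates $X'$ by the magic rules of Eq.~\eqref{eq:def:of:r:prime} evaluated against the full $X$ (not against the optimized symbols). Since $X$ does not depend on $X'$, the Stratification Rule puts the original $\bar X$ in stratum $0$ and leaves $X'$ to be computed with $\bar X$ as an EDB. (ii) \emph{Fixpoint.} By the Fixpoint Rule the later stratum may assume the pre-fixpoint constraint $F(\bar X)\Rightarrow\bar X$; this is the only global fact about $\bar X$ that step (iii) needs. (iii) \emph{FGH.} Collapse this onto $\Pi_O$ via the homomorphism $G$ that sends a state $(X,X')$ to $(X',\,\setof{R(\bm u,\bm v)\in X}{\bm u\in R'})$, i.e.\ keeps the magic predicates and replaces each original relation by its restriction to its magic set. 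One checks $G(\emptyset)=\emptyset$ and that $G$ intertwines the two immediate-consequence operators, $G\circ\mathrm{ICO}_{\text{tower}} = \mathrm{ICO}_{\Pi_O}\circ G$; the simplified FGH rule~\eqref{eq:fgh:simplified} (if needed, its loop-invariant refinement Corollary~\ref{cor:invariant}, with the invariant that $X'$ is closed under the magic rules) then yields $G(\overline{\text{tower}}) = \overline{\Pi_O}$, which is precisely the statement of the Lemma.

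\textbf{Main obstacle.} The hard step is verifying the intertwining identity in (iii): that ``restrict every relation to its magic set'' commutes with one immediate-consequence step. Concretely one must show, for each rule $r:A\cd B_1\wedge\cdots\wedge B_k$, that (a) a magic fact generated for $B_\ell$ from $A'_O\wedge B_1\wedge\cdots\wedge B_{\ell-1}$ (full relations) is the same as one generated from $A'_O\wedge B_{1,O}\wedge\cdots\wedge B_{\ell-1,O}$ (restricted), and (b) restricting $F(X)$ to the magic set equals $F_O$ applied to the restricted relations. Both are the classical soundness/completeness argument for sideways information passing; the subtle direction — that the full and restricted bodies agree — is exactly the place that consumes the pre-fixpoint constraint of step (ii): a relevant subcall of $r$ can only have been triggered by solutions of $B_1,\ldots,B_{\ell-1}$ that are themselves reachable through relevant calls and hence already lie in $B_{1,O},\ldots,B_{\ell-1,O}$. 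Once this single identity is discharged, Theorem~\ref{th:fgh} propagates it along the whole computation mechanically, so the remainder of the proof is bookkeeping.
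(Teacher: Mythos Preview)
Your high-level strategy --- reduce to the FGH, Stratification, and Fixpoint rules --- matches the paper's, and your soundness direction $\bar Q_O \Rightarrow \bar Q$ by monotone induction is fine. But the completeness direction has a genuine gap, and it begins with a misreading of Lemma~\ref{lemma:magic:sets:main:property}.

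First, Lemma~\ref{lemma:magic:sets:main:property} does \emph{not} assert $\bar R_O = \bar R'_O \wedge \bar R$. That identity is the \emph{conclusion} of Theorem~\ref{th:magic:fgh}, proved from the Lemma. The Lemma itself (``Property~B'') is a syntactic statement: for every instance $X$ satisfying the boundedness invariant $\bar X'_O \wedge X \Rightarrow \bar X_O$, each rule body $\bar A'_O \wedge \bm B$ equals $\bar A'_O \wedge \bar{\bm B}'_O \wedge \bm B$; equivalently, $\bar X'_O \wedge F(X) = \bar X'_O \wedge F(\bar X'_O \wedge X)$ for bounded $X$. It is proved by chase/back-chase, separately from the three rules.

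Second --- and this is the real gap --- the intertwining $G\circ\mathrm{ICO}_{\text{tower}} = \mathrm{ICO}_{\Pi_O}\circ G$ for your $G(X,X') = (X',\,X'\wedge X)$ fails step-by-step. Writing out the second component gives
\[
  F'(X',X)\,\wedge\,F(X)\ \stackrel{?}{=}\ X'\,\wedge\,F(X'\wedge X),
\]
where the left side uses the \emph{new} magic predicates $F'(X',X)$ and the \emph{full} $X$, while the right side uses the \emph{old} $X'$ and the \emph{restricted} $X'\wedge X$. In the running example, at step~$2$ the left side already puts $\setof{x}{G(x)\wedge V(x)}$ into the $R$-slot, whereas $R_{O,2}=\emptyset$ since $R'_{O,1}=\emptyset$. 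Your proposed invariant ``$X'$ is closed under the magic rules'' does not rescue this: it fails at the initial state (the fact rule $Q'_O()\cd$ fires out of $\emptyset$), and in any case the mismatch above is between the old $X'$ and the new $F'(X',X)$, not a closure property of either one. The argument you sketch in the ``Main obstacle'' paragraph is a fixpoint argument, but FGH requires commutation at \emph{every} iterate.

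The paper's decomposition runs the other way around. It stratifies $\Pi_O$, not an augmented $\Pi$: an FGH ``copy'' step followed by the Stratification Rule freezes $\bar X'_O$ as a constant and isolates the stratum $\Pi_1:\ X_O \cd \bar X'_O \wedge F(X_O)$. The FGH homomorphism is then $G(X)\defeq \bar X'_O \wedge X$ from $\Pi$ to $\Pi_1$, verified under the loop invariant $\Phi(X)\equiv(\bar X'_O\wedge X\Rightarrow\bar X_O)$ (itself established via the Fixpoint Rule and Property~B). The required identity $G(F(X))=H(G(X))$ becomes exactly $\bar X'_O\wedge F(X)=\bar X'_O\wedge F(\bar X'_O\wedge X)$, i.e.\ Property~B. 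The point your construction misses is that the magic-predicate component must be a \emph{constant} in the FGH step; once $\bar X'_O$ is frozen, the single-step commutation reduces cleanly to Property~B, whereas with a moving $X'$ it does not.
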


We will re-prove the theorem by showing, importantly, that the
equivalence of the two programs follows from the three rules, FGH,
Stratification, and Fixpoint, and therefore can be checked
automatically by a verifier.

\subsection{\property}

To prove the equivalence of $\Pi$ and $\Pi_O$ we only need one
property of the optimized program, which we call \property.  We prove
here that the specific rewriting in Def.~\ref{def:magicsets} ensures
that the resulting program $\Pi_0$ satisfies \property; later we will
show that any program $\Pi_O$ satisfying \property\ is equivalent to
$\Pi$.  Our proof here consists of several applications of the chase
procedure, which we briefly review here.

\paragraph*{The Chase} Suppose that the following constraint holds:
$\forall x (\Phi(x) \Rightarrow \Psi(x))$.  Then the following
equivalence holds:
\begin{align*}
  \forall x (\Phi(x) \wedge \Gamma(x) \equiv & \Phi(x) \wedge \Psi(x) \wedge \Gamma(x))
\end{align*}
By ``applying the chase'' we mean rewriting the formula
$\Phi(x) \wedge \Gamma(x)$ to
$\Phi(x) \wedge \Psi(x) \wedge \Gamma(x)$.  The ``back-chase''
proceeds in reverse, i.e. it removes $\Psi(x)$.  Both the chase and
the back-chase can be encoded in an \eqsat\ system (see
Sec.~\ref{sec:semantic-opt}), and therefore proofs based on
chase/back-chase can be derived and checked automatically.

Consider a program $\Pi$ and its magic-set rewriting $\Pi_O$ in
Definition~\ref{def:magicsets}.  Consider the least fixpoint of
$\Pi_O$; as usual we denote by $\bar R_O', \bar R_O$ the instances in
this least fixpoint; this notation extends to the case when $R$ is
an EDB symbol, then simply $\bar R_O\defeq R$.

\begin{definition}[Boundedness] \label{def:boundedness}
  Let $R$ be an IDB symbol occurring in the program $\Pi$.  We say
  that an instance $R$ is {\em bounded} w.r.t. $\Pi_O$ if it
  satisfies:
  \begin{align}
    \forall u \forall v (\bar R'_O(u) \wedge R(u,v) \Rightarrow & \bar R_O(u,v)) \label{eq:assumption:magic:sets:main:property}
  \end{align}
\end{definition}
This definition holds trivially for each EDB relation $R$: each such
relation is bounded, because $\bar R_O = R$ by definition.

\begin{lemma}[\property] \label{lemma:magic:sets:main:property}
  Consider an instance of all IDBs of the program $\Pi$ that is
  bounded w.r.t. $\Pi_O$.  Then, for every rule $r$ in the original
  program $\Pi$:
  \begin{align*}
    r:\  A \cd & \bm B
  \end{align*}
  the following equivalence holds:
  \begin{align}
    \bar A'_O \wedge \bm B \equiv & \bar A'_O \wedge \bar{\bm B}_O' \wedge \bm B \label{eq:equivalence:needed:for:magic:proof}
  \end{align}
\end{lemma}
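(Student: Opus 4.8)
The plan is to prove the nontrivial direction of~\eqref{eq:equivalence:needed:for:magic:proof} by a short, deterministic chase derivation. The implication from the right-hand side to the left-hand side is immediate, since the right-hand side is the left-hand side with the extra conjuncts $\bar B'_{1,O},\dots,\bar B'_{k,O}$ adjoined, so every assignment to the variables of $r$ satisfying the right-hand side satisfies the left-hand side. The work is to show that, under the least-fixpoint relations of $\Pi_O$ and the assumption that the given IDB instance is bounded w.r.t.\ $\Pi_O$, the conjunction $\bar A'_O\wedge\bm B$ already forces every magic atom $\bar B'_{\ell,O}$. I would do this by walking through the body atoms $B_1,\dots,B_k$ in the sideways-information-passing order fixed in step~\ref{item:def:magicsets:1} of Definition~\ref{def:magicsets}, firing at each atom $B_\ell$ first the magic rule $r'_\ell$ of~\eqref{eq:def:of:r:prime} and then, if $B_\ell$ is an IDB atom, the boundedness constraint~\eqref{eq:assumption:magic:sets:main:property} of its predicate.

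Concretely, fix an assignment $\theta$ to the variables of $r$ satisfying $\bar A'_O\wedge\bm B$ and argue by induction on $\ell=1,\dots,k$ that $\theta$ satisfies both $\bar B'_{\ell,O}$ and $\bar B_{\ell,O}$. For $\bar B'_{\ell,O}$: since the relations $\bar{\cdot}_O$ form a fixpoint of $\Pi_O$, the rule $r'_\ell\colon B'_{\ell,O}\cd A'_O\wedge B_{1,O}\wedge\cdots\wedge B_{\ell-1,O}$ gives the implication $\bar A'_O\wedge\bar B_{1,O}\wedge\cdots\wedge\bar B_{\ell-1,O}\Rightarrow\bar B'_{\ell,O}$, whose premises hold under $\theta$ by hypothesis ($\bar A'_O$) and by the induction hypothesis ($\bar B_{j,O}$ for $j<\ell$); this is a single chase step with $r'_\ell$ read as a constraint. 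For $\bar B_{\ell,O}$ I split on $B_\ell$: if $B_\ell$ is an EDB atom then $B_{\ell,O}$ is the same EDB atom, so $\theta$ satisfies it because it satisfies $\bm B$; if $B_\ell$ is an IDB atom with predicate $R$, i.e.\ $B_\ell=R(\bm u,\bm v)$, I chase with the boundedness constraint~\eqref{eq:assumption:magic:sets:main:property} for $R$, whose premises $\bar R'_O(\bm u)$ and $R(\bm u,\bm v)$ hold under $\theta$ (the former is $\bar B'_{\ell,O}$, just derived; the latter is part of $\bm B$), concluding $\bar R_O(\bm u,\bm v)=\bar B_{\ell,O}$. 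After the round $\ell=k$, $\theta$ satisfies $\bar A'_O\wedge\bar B'_{1,O}\wedge\cdots\wedge\bar B'_{k,O}\wedge\bm B$, which is the right-hand side of~\eqref{eq:equivalence:needed:for:magic:proof}; the intermediate atoms $\bar B_{\ell,O}$ produced along the way are implied by the right-hand side, so they can be removed by back-chase. Since the whole derivation is a bounded, rule-driven chase/back-chase sequence, it is exactly the kind of reasoning an \eqsat\ system can replay automatically (Sec.~\ref{sec:semantic-opt}).

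I do not expect a deep obstacle here, but the argument has to be set up carefully on three points. First, the two halves of the induction are interdependent --- $\bar B'_{\ell,O}$ is obtained from the $\bar B_{j,O}$ with $j<\ell$, while $\bar B_{\ell,O}$ is obtained from $\bar B'_{\ell,O}$ --- so the chase rounds must proceed in exactly the body ordering picked by the rewriting, and the base case $\ell=1$ relies on $r'_1$ having body precisely $A'_O$. Second, the EDB/IDB case split is essential: boundedness is assumed (and needed) only for the IDB atoms, whereas for EDB atoms one uses instead that $\Pi_O$ keeps the original EDBs unchanged, so $B_{\ell,O}=B_\ell$. Third, one should note that firing $r'_\ell$ under $\theta$ is well defined because the body ordering in step~\ref{item:def:magicsets:1} is a genuine sideways-information-passing order, so every variable of $B'_{\ell,O}$ already occurs in $A'_O\wedge B_{1,O}\wedge\cdots\wedge B_{\ell-1,O}$.
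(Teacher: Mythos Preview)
Your proposal is correct and follows essentially the same chase/back-chase induction as the paper: the paper organizes it around intermediate formulas $\Phi_\ell \defeq \bar A'_O \wedge \bigwedge_{i\le\ell}(\bar B'_{i,O}\wedge B_i)\wedge\bigwedge_{i>\ell}B_i$ and proves $\Phi_{\ell-1}\equiv\Phi_\ell$ by chasing in the $\bar B_{i,O}$ (via boundedness), then $\bar B'_{\ell,O}$ (via the Fixpoint rule applied to $r'_\ell$), then back-chasing out the $\bar B_{i,O}$---whereas you carry both $\bar B'_{\ell,O}$ and $\bar B_{\ell,O}$ in the induction hypothesis and back-chase only once at the end. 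One small correction to your third caveat: the paper's framework allows an \emph{arbitrary} body ordering and moding (not necessarily a SIP order, cf.\ the Discussion in Appendix~\ref{app:magic:sets}), and the chase with $r'_\ell$ is well defined simply because your $\theta$ already assigns values to every variable of $r$, hence in particular to those of the (possibly unsafe) head $B'_{\ell,O}$.
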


To help the reader parsing
Eq.~\eqref{eq:equivalence:needed:for:magic:proof}, we note that the
atoms in the sequence $\bm B$ refer to the bounded relational instance
(the symbols appearing in $\Pi$), while the atoms 
$\bar{A}'_O$ and
$\bar{\bm B}_O'$ refer to the fixpoint of the program $\Pi_O$.

\begin{proof}
  This proof is the place where we use the magic
  rule~\eqref{eq:def:of:r:prime} in item~\ref{item:def:magicsets:1} of
  Definition~\ref{def:magicsets}.  Specifically, we write
  $\bm B = B_1 \wedge \cdots \wedge B_k$, where the order of the atoms
  is that chosen in Def.~\ref{def:magicsets}
  item~\ref{item:def:magicsets:1}.  At fixpoint, for each $\ell=1,k$,
  the magic rule~\eqref{eq:def:of:r:prime} becomes the following implication by the Fixpoint rule  
  (see Eq.~\eqref{eq:constraint:at:fixpoint}):
  \begin{align}
    \bar A'_O \wedge \bigwedge_{i=1,\ell-1} \bar B_{i,O} \Rightarrow & \bar B_{\ell,O}' \label{eq:r:prime:constraint}
  \end{align}
  We will chase repeatedly the LHS
  of~\eqref{eq:equivalence:needed:for:magic:proof} with the
  implications~\eqref{eq:r:prime:constraint}
  and~\eqref{eq:assumption:magic:sets:main:property} to arrive at the
  RHS.

  For each $\ell=0,k$, denote by $\Phi_{\ell}$ the following
  sentence:
  \begin{align*}
    \Phi_\ell \defeq & \bar A' \wedge \bigwedge_{i=1,\ell}\left(\bar B_{i,O}'\wedge  B_i\right)\wedge \bigwedge_{i=\ell+1,k}  B_i
  \end{align*}
  Eq.~\eqref{eq:equivalence:needed:for:magic:proof} asserts that
  $\Phi_0 \equiv \Phi_k$, and we prove it by showing that, for every
  $\ell=1,k$, the following holds:
  \begin{align*}
    \Phi_{\ell-1} \equiv & \Phi_{\ell}
  \end{align*}
  This follows from the following chase steps:
  \begin{align*}
    \bar A'_O \wedge \bigwedge_{i=1,\ell-1} \left(\bar B_{i,O}' \wedge  B_i\right) \wedge  B_\ell  \equiv &\bar A'_O \wedge \bigwedge_{i=1,\ell-1} \left(\bar B_{i,O}' \wedge \bar B_{i,O} \wedge  B_i\right) \wedge  B_\ell  &&\mbox{Chase with~\eqref{eq:assumption:magic:sets:main:property}}\\
    \equiv & \bar A'_O \wedge \bigwedge_{i=1,\ell-1} \left(\bar B_{i,O}' \wedge \bar B_{i,O} \wedge  B_i\right)\wedge \bar B_{\ell,O}' \wedge  B_\ell  &&\mbox{Chase with~\eqref{eq:r:prime:constraint}}\\
    \equiv & \bar A'_O \wedge \bigwedge_{i=1,\ell} \left(\bar B_{i,O}' \wedge  B_i\right)  &&\mbox{Back-chase with~\eqref{eq:assumption:magic:sets:main:property}}
  \end{align*}
  By conjoining both sides of the equivalence above with
  $\bigwedge_{i=\ell+1,k} B_i$ we obtain
  $\Phi_{\ell-1}\equiv \Phi_\ell$, as required.
\end{proof}

\begin{example}
  We describe the \property\ for the running
  Example~\ref{ex:magic:running}.  Let $\bar R_O', \bar Q_O'$ be the
  outputs of the magic predicates of the program $\Pi_O$.  Notice that
  $Q_O'() \equiv \texttt{true}$.  The \property\ asserts the following:
  if $R, Q$ are two {\em bounded} instances, meaning that they satisfy:
  \begin{align}
    \forall x (\bar R_O'(x) \wedge R(x) \Rightarrow & \bar R_O(x))
&   \forall x (\bar Q_O'() \wedge Q(x) \Rightarrow & \bar Q_O(x)) \label{eq:constraint:for:chasing}
  \end{align}
  then the following holds (one constraint for each of the 3 rules of
  $\Pi$):
  \begin{align*}
    \forall x (\bar R'_O(x) \wedge V(x) \equiv & \bar R'_O(x) \wedge \bar V'_O() \wedge V(x))\\
    \forall x (\bar R'_O(x) \wedge T(x,y,z) \wedge R(y) \wedge R(z) \equiv & \bar R'_O(x) \wedge \bar T_O'() \wedge T(x,y,z) \wedge \bar R_O'(y) \wedge R(y) \wedge \bar R'_O(z) \wedge R(z))\\
    \forall x (\bar Q_O'() \wedge G(x) \wedge R(x) \equiv & \bar Q_O'() \wedge \bar G_O'() \wedge G(x) \wedge \bar R_O'(x) \wedge R(x))
  \end{align*}
  We saw in Example~\ref{ex:magic:running:2} that
  $\bar V'_O() \equiv \bar T'_O() \equiv \bar G'_O() \equiv
  \texttt{true}$, hence to check \property\ it suffices to check
  only two equivalences:
  \begin{align*}
    \forall x (\bar R'_O(x) \wedge T(x,y,z) \wedge R(y) \wedge R(z) \equiv & \bar R'_O(x) \wedge  T(x,y,z) \wedge \bar R_O'(y) \wedge R(y) \wedge \bar R'_O(z) \wedge R(z))\\
    \forall x (\bar Q_O'() \wedge G(x) \wedge R(x) \equiv & \bar Q_O'() \wedge G(x) \wedge \bar R_O'(x) \wedge R(x))
  \end{align*}
  We leave it up to the reader to check that both constraints can be
  derived by repeated chase and back-chase using the
  constraints~\eqref{eq:constraint:for:chasing} and the following
  constraints derived from the property of the fixpoint of $\Pi_O$
  (see Eq.~\eqref{eq:constraint:at:fixpoint}):
  \begin{align*}
    \forall x (\bar R_O'(x) \wedge T(x,y,z) \Rightarrow & \bar R_O'(y))\\
    \forall x (\bar R_O'(x) \wedge T(x,y,z) \wedge \bar R_O(y) \Rightarrow & \bar R_O'(z))
  \end{align*}
\end{example}

\subsection{Correctness Proof of Magic Set Rewriting}

We prove a stronger claim than Theorem~\ref{theo:magicsets}: we prove
the correctness of magic set rewriting using only the FGH, the
Stratification, and the Fixpoint rules.  While at a high level our
proof is inspired by Mascellani and
Pedreschi~\cite{DBLP:conf/birthday/MascellaniP02}, it differs in that
we use the least fixpoint semantics of a datalog program rather than
the minimal model semantics.  This allows us to prove the equivalence
$\Pi \equiv \Pi_O$ using only the three rules, FGH, Stratification,
and Fixpoint.  Moreover, our proof here is independent of the
particular definition of the magic set rewriting used to define
$\Pi_O$ (Def.~\ref{def:magicsets}), and, instead, applies to any
program $\Pi_O$ that satisfies \property.

Let $X$ denote the tuple of IDBs of the original program $\Pi$, and
fix a moding.  Then we let $X'$ denote the tuple of the magic IDBs, in
other words:
\begin{align*}
  X = & (R_1(\bm u_1, \bm v_1), R_2(\bm u_2, \bm v_2), \ldots) \\
  X' = & (R'_1(\bm u_1), R'_2(\bm u_2), \ldots)
\end{align*}
We denote by $X' \wedge X$ their pairwise conjunction:
\begin{align*}
  X' \wedge X \defeq & (R'_1(\bm u_1)\wedge R_1(\bm u_1, \bm v_1), R'_2(\bm u_2)\wedge R_2(\bm u_2, \bm v_2), \ldots)
\end{align*}
%


Consider two programs $\Pi$ and $\Pi_O$:
\begin{align}
  \Pi: \ X \cd &F(X) & \Pi_O:\ X'_O \cd & F'(X_O',X_O) \label{eq:magic:f} \\
               && X_O \cd & X_O' \wedge F(X_O) \nonumber
\end{align}
where $F, F'$ are two monotone functions. The second line in $\Pi_O$
corresponds directly to the optimized rules in
item~\ref{item:def:magicsets:3} of Definition~\ref{def:magicsets}.
The first line corresponds to the magic predicates.  For the moment we
allow $F'$, which
should not be confused with a derivative\footnote{We use the notation
  $F'$ to follow the convention
  in~\cite{DBLP:conf/birthday/MascellaniP02} where the magic predicate
  for $R$ is denoted $R'$.}, to  be any monotone function.

Let $\bar X_O', \bar X_O$ denote the fixpoint of the program $\Pi_O$.
We generalize Definition~\ref{def:boundedness}:

\begin{definition}[Boundedness] We say that $X$ is {\em bounded}
  w.r.t. $\Pi_O$ if it satisfies:
  \begin{align*}
    \bar X'_O \wedge X \Rightarrow & \bar X_O
  \end{align*}
\end{definition}

\begin{definition}[\property]
  We say that $\Pi_O$ satisfies \property, if every bounded $X$
  satisfies:
  \begin{align}
    \bar X'_O \wedge F(X) = & \bar X'_O \wedge F(\bar X'_O \wedge X)\label{eq:consequence:magic:sets:main:property}
  \end{align}
\end{definition}

\begin{theorem} \label{th:magic:fgh} Consider two programs
  $\Pi, \Pi_O$ where $\Pi_O$ satisfies \property.  Then the following
  holds:
  \begin{align}
     \bar X_O' \wedge \bar X = & \bar X_O \label{eq:origina:magic:relationship}
  \end{align}
  where $\bar X$ is the fixpoint of $\Pi$ and $\bar X_O', \bar X_O$ is
  the fixpoint of $\Pi_O$.
\end{theorem}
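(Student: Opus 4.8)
The plan is to route through the loop-invariant refinement of the FGH-rule (Corollary~\ref{cor:invariant}) rather than Theorem~\ref{th:fgh} itself. Define the intermediate function $G(X) \defeq \bar X_O' \wedge X$ and the replacement function $H(Y) \defeq \bar X_O' \wedge F(Y)$. The key observation is that $H$ is exactly the ICO of the ``optimized stratum'' of $\Pi_O$ once the magic fixpoint $\bar X_O'$ has been frozen, i.e.\ of the rule $X_O \cd X_O' \wedge F(X_O)$ with $X_O' := \bar X_O'$. Unfolding the FGH identity $G(F(X)) = H(G(X))$ gives $\bar X_O' \wedge F(X) = \bar X_O' \wedge F(\bar X_O' \wedge X)$, which is precisely \property\ (Eq.~\eqref{eq:consequence:magic:sets:main:property}) --- but \property\ is only guaranteed for \emph{bounded} $X$, so the plain FGH-rule does not apply and a loop invariant is genuinely needed.

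Accordingly, I would take $\phi(X)$ to be the boundedness predicate $\bar X_O' \wedge X \Rightarrow \bar X_O$ and verify the three hypotheses of Corollary~\ref{cor:invariant}. Condition~\eqref{eq:invariant:0} is immediate: $X_0 = \emptyset$ for both $\Pi$ and $\Pi_O$, and $\bar X_O' \wedge \emptyset = \emptyset \Rightarrow \bar X_O$; note also $G(\emptyset) = \emptyset$, so the initial states match. For the invariance condition~\eqref{eq:invariant:1}, assume $\phi(X)$: by \property, $\bar X_O' \wedge F(X) = \bar X_O' \wedge F(\bar X_O' \wedge X)$; by monotonicity of $F$ and $\phi(X)$, $F(\bar X_O' \wedge X) \Rightarrow F(\bar X_O)$; and since $(\bar X_O', \bar X_O)$ is a fixpoint of $\Pi_O$, its second rule gives $\bar X_O' \wedge F(\bar X_O) = \bar X_O$ (this is the Fixpoint rule, Eq.~\eqref{eq:constraint:at:fixpoint}); chaining these yields $\bar X_O' \wedge F(X) \Rightarrow \bar X_O$, i.e.\ $\phi(F(X))$. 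Condition~\eqref{eq:invariant:2} is again just \property\ restricted to bounded $X$. Corollary~\ref{cor:invariant} then gives that $\Pi$ is equivalent to the GH-program $Y \cd H(Y)$; since $H$ is monotone and the iteration starts at $\emptyset$, the GH-program computes $\lfp(H)$, and by the commuting diagram of Theorem~\ref{th:fgh} together with Corollary~\ref{cor:number:iterations} this value equals $G(\bar X) = \bar X_O' \wedge \bar X$. In short, $\bar X_O' \wedge \bar X = \lfp\bigl(Y \mapsto \bar X_O' \wedge F(Y)\bigr)$.

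It remains to identify this least fixpoint with $\bar X_O$, and here I would invoke the Stratification rule in its nested-fixpoint (Bekić) form: freezing the first component of the simultaneous least fixpoint of $\Pi_O$ at its final value $\bar X_O'$ and re-solving the second component reproduces $\bar X_O$, so $\bar X_O = \lfp(Y \mapsto \bar X_O' \wedge F(Y))$. If one prefers to avoid the nested-fixpoint statement, the same conclusion follows by an elementary two-sided Knaster--Tarski argument: on one hand $\bar X_O = \bar X_O' \wedge F(\bar X_O)$ makes $\bar X_O$ a fixpoint of $H$, so $\lfp(H) \Rightarrow \bar X_O$; on the other hand $(\bar X_O', \lfp(H))$ is a pre-fixpoint of the ICO of $\Pi_O$ --- its second component is a fixpoint of $H$ by construction, and its first component satisfies $F'(\bar X_O', \lfp(H)) \Rightarrow F'(\bar X_O', \bar X_O) = \bar X_O'$ by monotonicity of $F'$ and $\lfp(H) \Rightarrow \bar X_O$ --- so the least fixpoint $(\bar X_O', \bar X_O)$ is contained in it, giving $\bar X_O \Rightarrow \lfp(H)$. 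Combining the two inclusions yields $\bar X_O = \lfp(H) = \bar X_O' \wedge \bar X$, which is Eq.~\eqref{eq:origina:magic:relationship}.

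The main obstacle --- and the whole reason for going through Corollary~\ref{cor:invariant} rather than Theorem~\ref{th:fgh} --- is that \property\ is a \emph{conditional} identity, valid only on bounded instances; one must guess the right loop invariant (boundedness itself) and, more delicately, show that $F$ preserves it, which is exactly the step that forces the simultaneous use of \property, monotonicity of $F$, and the Fixpoint-rule identity $\bar X_O = \bar X_O' \wedge F(\bar X_O)$. Everything else is bookkeeping: matching the initial states ($\emptyset \mapsto \emptyset$), reading off that the GH-program's answer is $\lfp(H)$, and the Stratification/Bekić identification of $\lfp(H)$ with $\bar X_O$.
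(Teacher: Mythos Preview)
Your proof is correct and shares its core with the paper's: same $G(X) = \bar X_O' \wedge X$, same $H(Y) = \bar X_O' \wedge F(Y)$, same loop invariant (boundedness), and the same verification of invariance via \property{} + monotonicity of $F$ + the Fixpoint rule. The difference is in how you establish $\lfp(H) = \bar X_O$. The paper does this in two preparatory steps: it first uses an FGH argument to pass from $\Pi_O$ to an auxiliary program $\Pi_{\text{copy}}$ that duplicates $X_O$ as $X_{\text{copy}}$, and only then applies the Stratification rule. The copy is not cosmetic: the Stratification rule as stated requires the first stratum's ICO $K(Z)$ to be independent of the second stratum's variable, but in $\Pi_O$ the magic predicates $X_O'$ depend on $X_O$ through $F'$, so direct stratification is not licensed. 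Your phrase ``Stratification rule in its nested-fixpoint (Beki\'c) form'' therefore invokes something strictly stronger than the paper's stated rule. Your alternative two-sided Knaster--Tarski argument is correct and self-contained, but it steps outside the paper's declared toolkit of \emph{only} the FGH, Stratification, and Fixpoint rules --- and staying inside that toolkit is precisely the point of this appendix. So: both routes reach the result, yours more directly; the paper's extra copy-then-stratify detour is the price it pays for remaining within its restricted rule set.
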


Theorem~\ref{th:magic:fgh} immediately implies
Theorem~\ref{theo:magicsets}.  Indeed, if $F(X)$ is the ICO of the
original datalog program and $F'$ defines the magic predicates, as per
items~\ref{item:def:magicsets:1} and~\ref{item:def:magicsets:2} of
Def.~\ref{def:magicsets}, then the optimized program satisfies
\property, by Lemma~\ref{lemma:magic:sets:main:property}.  Thus, the
identity~\eqref{eq:origina:magic:relationship} holds, and, in
particular, the following for the query predicate $Q$:
\begin{align*}
  Q'_O() \wedge Q(\bar u) = & Q_O(\bar u)
\end{align*}
Theorem~\ref{theo:magicsets} follows from the fact that
$Q'_O() \equiv \texttt{true}$; this is the only place where we need
item~\ref{item:def:magicsets:2} of Definition~\ref{def:magicsets}.

In the rest of this subsection we prove Theorem~\ref{th:magic:fgh}.

\begin{proof} [Proof of Theorem~\ref{th:magic:fgh}]  We use four steps.

  \paragraph*{\bf Step 1: FGH rule for $\Pi_O \equiv \Pi_{\text{copy}}$.}  We start
  by using the FGH-rule to prove that the program $\Pi_O$ is
  equivalent to the following program:
  \begin{align*}
    \Pi_{\text{copy}}:\ \ \ \ \ X'_O \cd & F'(X_O',X_{\text{copy}}) \\
                     X_{\texttt{copy}} \cd & X_O' \wedge F(X_{\text{copy}}) \\
                     X_O \cd & X_O' \wedge F(X_O)
  \end{align*}
  The new program creates a copy $X_{\text{copy}}$ of $X_O$.
  Intuitively, it is obvious that the new program computes the same
  IDBs as the original program.  Formally, one can check that the
  following function $G$ is a homomorphism mapping the state
  $(X_O', X_O)$ of $\Pi_O$ to the state $(X_O', X_{\text{copy}}, X_O)$
  of $\Pi_{\text{copy}}$: $G(X_O', X_O) \defeq (X_O', X_O, X_O)$.  If
  the fixpoint of $\Pi_O$ is $(\bar X'_O, \bar X_O)$, then the
  homomorphism implies that the fixpoint of $\Pi_{\text{copy}}$ is
  $(\bar X_O', \bar X_O, \bar X_O)$.

  \paragraph*{\bf Step 2: Stratification rule for  $\Pi_{\text{copy}} \equiv \Pi_1$.} We apply
  the stratification rule to $\Pi_{\text{copy}}$, and write it as:
  \begin{align}
    &   \begin{array}[b]{rrl}
          \Pi_0: & X'_O \cd & F'(X_O',X_{\text{copy}}) \\
                 & X_{\texttt{copy}} \cd & X_O' \wedge F(X_{\text{copy}}) \\
          \\ \hline
          \\
          \Pi_1: & X_O \cd & \bar X_O' \wedge F(X_O)
    \end{array}
\label{eq:magic:program:pi:o:pime}
  \end{align}
  We denote by $\bar X_O', \bar X_{\text{copy}} 
   (= \bar X_O)$ the
  fixpoint of the first stratum $\Pi_0$.  Importantly, the second
  stratum $\Pi_1$ uses the fixpoint $\bar X_O'$ as an EDB.

  \paragraph*{\bf Step 3: Fixpoint rule for the invariant $\Phi(X)$.}
  Next, we prove that the state $X$ of the original program $\Pi$
  satisfies the following invariant:
  \begin{align*}
    \Phi(X) \equiv & (\bar X_O' \wedge X \Rightarrow \bar X_O)
  \end{align*}
  The invariant holds trivially when $X=\emptyset$.  Assuming that it
  holds for $X$, we check that it also holds for $F(X)$:
  \begin{align*}
    \bar X_O' \wedge F(X) \equiv & \bar X_O' \wedge F(\bar X_O'\wedge X) && \mbox{By \property,~\eqref{eq:consequence:magic:sets:main:property}}   \\ 
    \Rightarrow & \bar X_O' \wedge F(\bar X_O) && \mbox{Induction hypothesis $\Phi(X)$} \\
    \Rightarrow & \bar X_O && \mbox{Fixpoint rule: $\bar X_O$ is the least fixpoint of $\Pi_O$}
  \end{align*}

  \paragraph*{\bf Step 4: FGH-rule for $\Pi \equiv \Pi_1$.}
  Consider now the original program $\Pi$ and the program $\Pi_1$:
  their states are $X$ and $X_O$ respectively.  We claim that, under
  the invariant $\Phi(X)$, the following function $G$ is a
  homomorphism from $\Pi$ to $\Pi_O$:
  \begin{align*}
    G(X) \defeq & \bar X_O' \wedge X
  \end{align*}
  We prove $G(F(X)) = H(G(X))$ where $H$ is the ICO of the
  program $\Pi_1$, Eq.~\eqref{eq:magic:program:pi:o:pime}.  We expand both sides:
  \begin{align*}
    G(F(X)) = & \bar X_O' \wedge F(X) & H(G(X)) = & \bar X_O' \wedge F(\bar X_O' \wedge X)
  \end{align*}
  Their equality follows immediately from
  \property~\eqref{eq:consequence:magic:sets:main:property}.

  By the FGH-rule~\eqref{eq:fgh:simplified}, it follows that
  $\bar X_O' \wedge \bar X = \bar X_O$, where $\bar X, \bar X_O$ are
  the fixpoints of $\Pi$ and $\Pi_1$ respectively.  We have already
  shown that the fixpoint of $\Pi_1$ is equal to that of the magic
  optimized program $\Pi_O$, and this completes the proof.
\end{proof}

\subsection{Discussion}

\paragraph*{\bf Necessity of Stratification} A question is whether the
stratification rule is redundant, more precisely whether the
correctness proof of the magic set rewriting could be completed using
only the FGH rule.  The answer is {\em no}.  To see this, observe that
if two datalog programs are proven equivalent by the FGH rule, then
the number of iterations needed by the second program to reach a
fixpoint is at most equal to that needed by the first program, see
Corollary~\ref{cor:number:iterations}.  However, in some cases, the
magic-set optimized program may require a significantly larger number
of iterations, proving that the FGH-rule is insufficient to prove
their equivalence.

For a concrete example, consider the programs $\Pi$ and $\Pi_O$ in
Example~\ref{ex:magic:running}, and assume that $T(x,y,z)$ is a
complete binary tree of depth $n$ and with $2^n$ nodes, where $x$ is
the parent and $y$ and $z$ are the two children.  The original program
$\Pi$ reaches its fixpoint after $n$ iterations, while the optimized
program $\Pi_O$ may require $2^n$ iterations, since it performs a
left-deep traversal of the tree.

\paragraph*{\bf The Flexibility of Moding} A nice feature of
the framework
introduced by Mascellani and
Pedreschi~\cite{DBLP:conf/birthday/MascellaniP02} is that it decouples
the correctness proof from the performance consideration.  In
practice, the moding, which is also called {\em adornment} or {\em
  binding pattern}, is determined by a {\em Sideway Information
  Passing} (SIP) algorithm.  However, the correctness proof holds for
any moding, even if it is not the result of a SIP.  We illustrate this
decoupling with a classic example.

\begin{example}\label{ex:sg}
  The {\em same generation} program, and its magic rewriting, are the following:
  \begin{align*}
    &
      \begin{array}[b]{rrl}
        \Pi:& S(x,y) \cd & H(x,y) \\
            & S(x,y) \cd & U(x,p) \wedge S(p,q) \wedge D(q,y)\\
            & Q(y) \cd & S(a,y)
      \end{array}
                        \begin{array}[b]{rrl}
                          \Pi_O:& Q'_O() \cd & \\
                                & S'_O(p) \cd & S'_O(x) \wedge U(x,p) \\
                                & S'_O(a) \cd & Q'_O() \\
                                & S_O(x,y) \cd & S_O'(x) \wedge H(x,y) \\
                                & S_O(x,y) \cd & S_O'(x) \wedge U(x,p) \wedge S_O(p,q) \wedge D(q,y)\\
                                & Q_O(y) \cd & Q_O'() \wedge S_O(a,y)
                        \end{array}
  \end{align*}
  The EDBs $U, D, H$ stand for ``up'', ``down'', and ``horizontal''.

  The SIP-based magic rewriting will adorn $S$ with $+-$, because
  $Q(y) = S(a,y)$ where $a$ is a constant, and will order the atoms
  in the rule $S(x,y) \cd U(x,p) \wedge S(p,q) \wedge D(q,y)$ as
  shown, so as to facilitate sideways information passing.  This
  leads to the optimized program $\Pi_O$ shown above.

  What happens if we chose a different order in the rule for $S(x,y)$?
  Assuming the order is $S(x,y) \cd U(x,p) \wedge D(q,y) \wedge
  S(p,q)$.  The new magic set rewriting will have a modified rule for $S'_O(p)$:
  \begin{align*}
    S'_O(p) \cd & S'_O(x) \wedge U(x,p) \wedge D(q,y)
  \end{align*}
  We have introduced a redundant cartesian product with $D(q,y)$: the
  new magic program is still correct, but less efficient.

  What happens if we choose the adornment $-+$ for $S$?  Also, assume
  that we reorder the atoms in the second rule for $S$ to: $S(x,y) \cd
  D(q,y) \wedge S(p,q) \wedge U(x,p)$.  Then the magic rewriting
  becomes
  \begin{align*}
    Q'_O() \cd & \\
    S'_O(q) \cd & S'(y) \wedge D(q,y) \\
    S'_O(y) \cd & Q'_O() \\
    S_O(x,y) \cd & S_O'(y) \wedge H(x,y) \\
    S_O(x,y) \cd & S_O'(y) \wedge D(q,y), \wedge S_O(p,q) \wedge U(x,p)\\
    Q_O(y) \cd & Q_O'() \wedge S_O(a,y)
  \end{align*}
%
%
  The third rule above\footnote{Strictly speaking the rule is
    unsafe.  We allow it here for illustration.}, $S'_O(y) \cd
  Q'_O()$, defines $S'_O$ as the entire domain.  The ``optimized''
  program is still correct, but less efficient than the original one.
\end{example}

\paragraph*{\bf Multiple Modings} 
Finally, we explain how to circumvent an apparent limitation of the
framework of Mascellani and
Pedreschi~\cite{DBLP:conf/birthday/MascellaniP02}: the fact that each
predicate symbol $R$ can have a single mode.  When multiple modes are
needed, then this can be achieved by making copies of the IDBs and
moding them differently.  We illustrate this with another classic
example of magic set rewriting.

\begin{example}\label{ex:reverse:sg}
  Consider the reverse-same-generation program:
  \begin{align*}
    \Pi:\ \ \ S(x,y) \cd & H(x,y) \\
              S(x,y) \cd & U(x,p) \wedge S(q,p) \wedge D(q,y) \\
              Q_1(y) \cd & S(a,y)
  \end{align*}
  The only change is that $S(p,q)$ is replaced by $S(q,p)$ in the
  second rule.  The SIP algorithm requires us to adorn $S$ in two
  ways, both $+-$ and $-+$.  To achieve that it suffices to create two
  copies of $S$, the left $Sl$ and the right $Sr$.  More precisely,
  consider the program:
  \begin{align*}
    \Pi':\ \ \ Sl(x,y) \cd & H(x,y) \\
               Sr(x,y) \cd & H(x,y) \\
               Sl(x,y) \cd & U(x,p)\wedge Sr(q,p) \wedge D(q,y) \\
               Sr(x,y) \cd & D(q,y) \wedge Sl(q,p) \wedge U(x,p) \\
               Q(y) \cd & Sl(a,y)
  \end{align*}
  The FGH-rule proves formally that $\Pi$ and $\Pi'$ are equivalent.
  More precisely, consider the following function $G$ mapping the
  state $S$ of $\Pi$ to the state $Sl, Sr$ or $\Pi'$:
  $G(S) \defeq (S, S)$.  One can check immediately that $G$ is a
  homomorphism.  It follows that $\Pi$ and $\Pi'$ compute the same IDB
  $S = Sl = Sr$. 

  Following SIP, we define the following modings for $\Pi'$,
  $Sl(+-), Sr(-+), Q(-)$, and also use the ordering of the rules as
  show above, where, in the rule for $Sr$, we have switched the order
  of $D$ and $U$.  Then, the magic set transformation in
  Def.~\ref{def:magicsets} produces the following optimized program:
\begin{align*}
  \Pi_O:\ \ \    Q'_O() \cd &\\
                 Sr'_O(p) \cd & Sl'_O(x)\wedge U(x,p) \\
                 Sl'_O(q) \cd & Sr'_O(y)\wedge D(q,y) \\
                 Sl'_O(a) \cd & Q'_O()\\
                 Sl_O(x,y) \cd & Sl'_O(x)\wedge H(x,y) \\
                 Sr_O(x,y) \cd & Sr'_O(y) \wedge H(x,y) \\
                 Sl_O(x,y) \cd &Sl'_O(x)\wedge U(x,p)\wedge Sr_O(q,p) \wedge D(q,y) \\
                 Sr_O(x,y) \cd &Sr'_O(y)\wedge D(q,y) \wedge Sl_O(q,p) \wedge U(x,p) \\
                 Q_O(y) \cd & Q'_O(),Sl_O(a,y)
\end{align*}
\end{example}




\end{document}